\begin{document}
	
\newcommand{\minpoints}{mpts}
\newcommand{\mrgprime}[1][\minpoints]{{G'}_{#1}}
\newcommand{\rng}[1]{RNG^{#1}}
\newcommand{\rngi}[2]{RNG_{#1}^{#2}}
\newcommand{\mrd}[1][\minpoints]{mrd_{#1}}
\newcommand{\cd}[1][\minpoints]{c_{#1}}
\newcommand{\mrg}[1][\minpoints]{G_{#1}}

\newcommand{\maxk}{k_{max}}
\newcommand{\mptsrange}[1][\maxk]{k_1, \ldots, {#1}}
\newcommand{\mptssequence}[1][\mptsrange]{[{#1}]}
\newcommand{\mptsset}[1][\mptsrange]{\{{#1}\}}

\newcommand{\unfilteredRNG}{RNG**}
\newcommand{\smartRNG}{RNG*}
\newcommand{\exactRNG}{RNG}

\newcommand{\pple}{Property of the Largest Element }

\newcommand{\prng}[2]{P_{#1}(\rng{#2})}

\newcommand{\minpointsNN}[1]{#1}

\newcommand{\js}[1]{\textcolor{blue}{JS:$>>$} \textcolor{green}{#1} \textcolor{blue}{$<<$}}
\newcommand{\mn}[1]{\textcolor{blue}{MN:$>>$} \textcolor{red}{#1} \textcolor{blue}{$<<$}}
\newcommand{\ac}[1]{\textcolor{magenta}{AC:$>>$} \textcolor{cyan}{#1} \textcolor{magenta}{$<<$}}
\newcommand{\rc}[1]{\textcolor{red}{RC:$>>$} \textcolor{blue}{#1} \textcolor{red}{$<<$}}

\newtheorem{mydef}{Definition}
\newtheorem{corollary}{Corollary}
\newtheorem{theorem}{Theorem}
\newtheorem{proof}{Proof}

\renewcommand{\algorithmicrequire}{\textbf{Input:}}  
\renewcommand{\algorithmicensure}{\textbf{Output:}} 

\title{Efficient Computation of Multiple Density-Based Clustering Hierarchies}

\author{\IEEEauthorblockN{Antonio Cavalcante Araujo Neto\IEEEauthorrefmark{2},
Joerg Sander\IEEEauthorrefmark{2},
Ricardo J. G. B. Campello\IEEEauthorrefmark{3}, and
Mario A. Nascimento\IEEEauthorrefmark{2}}

\IEEEauthorblockA{\IEEEauthorrefmark{2}Department of Computing Science, University of Alberta, Canada}

\IEEEauthorblockA{\IEEEauthorrefmark{3}College of Science and Engineering, James Cook University, Australia}

\{antonio.cavalcante, jsander, mario.nascimento\}@ualberta.ca\IEEEauthorrefmark{2}, ricardo.campello@jcu.edu.au\IEEEauthorrefmark{3}
}






\maketitle

%
%
\begin{abstract}
\boldmath
HDBSCAN*, a state-of-the-art density-based hierarchical clustering method, produces a hierarchical organization of clusters in a dataset w.r.t. a parameter $\minpoints$. While the performance of HDBSCAN* is robust w.r.t. $\minpoints$ in the sense that a small change in $\minpoints$ typically leads to only a small or no change in the clustering structure, choosing a ``good'' $\minpoints$ value can be challenging: depending on the data distribution, a high or low value for $\minpoints$ may be more appropriate, and certain data clusters may reveal themselves at different values of $\minpoints$. 
To explore results for a range of $\minpoints$ values, however, one has to run HDBSCAN* for each value in the range independently, which is computationally inefficient.
In this paper, we propose an efficient approach to compute all HDBSCAN* hierarchies for a range of $\minpoints$ values by replacing the graph used by HDBSCAN* with a much smaller graph that is guaranteed to contain the required information.
An extensive experimental evaluation shows that with our approach one can obtain over one hundred hierarchies for the computational cost equivalent to running HDBSCAN* about 2 times.
\end{abstract}

%
%
\section{Introduction}
\label{sec:introduction}

The discovery of groups within datasets plays an important role in the exploration and analysis of data.
For scenarios where there is little to no prior knowledge about the data, clustering techniques are widely used. 
Density-based clustering, in particular, is a popular clustering paradigm that defines clusters as high-density regions in the data space, separated  by low-density regions. Algorithms in this class, such as DBSCAN \cite{DBLP:conf/kdd/EsterKSX96}, DENCLUE \cite{DBLP:conf/kdd/HinneburgK98}, OPTICS \cite{DBLP:conf/sigmod/AnkerstBKS99} and HDBSCAN* \cite{DBLP:journals/tkdd/CampelloMZS15}, stand out for their ability to find clusters of arbitrary shapes and to differentiate between cluster points and noise.

HDBSCAN*, the current state-of-the-art, computes a {\em hierarchy} of nested clusters, representing clusters at different density levels. It generalizes and improves several aspects of previous algorithms, and allows for a comprehensive framework for cluster analysis, visualization, and unsupervised outlier detection \cite{DBLP:journals/tkdd/CampelloMZS15}.
It requires a single parameter $\minpoints$, a smoothing factor that can implicitly influence which clusters are detectable in the cluster hierarchy.
Choosing a ``correct'' value for $\minpoints$ is typically not trivial. 
For instance, consider the examples in Figure \ref{fig:example}, which shows the results of  HDBSCAN* (with automatic cluster extraction) for two datasets $A$ and $B$ and two sample $\minpoints$ values, $\minpoints = 5$ and $25$, selected after running HDBSCAN* multiple times to both datasets w.r.t. $\minpoints \in \{2,3,...,100\}$. 
The main points here is that (1) there is no single value of $\minpoints$ that would result in the extraction of the clusters in both cases, and (2) a user would not know for a general dataset which value for $\minpoints$ is suitable. Dataset A is completely labeled as noise for $\minpoints > 24$, while the two structures in dataset B only start to be detected as separate clusters for $\minpoints > 24$. It may even be the case that different values of $\minpoints$ are needed to reveal clusters in different areas of the data space of a single dataset. 

\begin{figure*}[hbt] 
	\centering
	\subfloat[$A$, $\minpoints = 5$] {\includegraphics[width=.5\columnwidth,trim={0 0 0 1.2cm},clip]{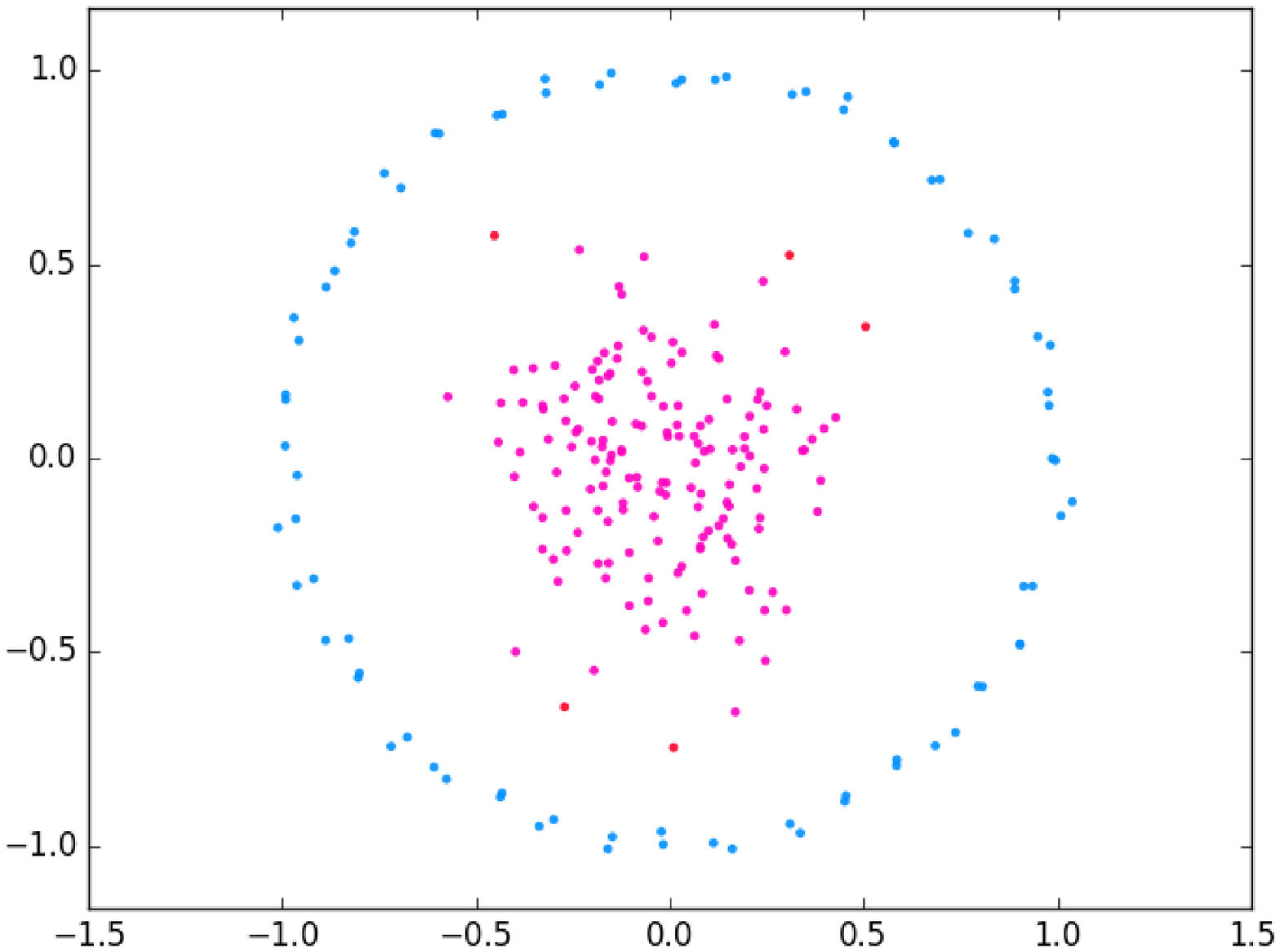}}
    \subfloat[$A$, $\minpoints = 25$] {\includegraphics[width=.5\columnwidth,trim={0 0 0 1.2cm},clip]{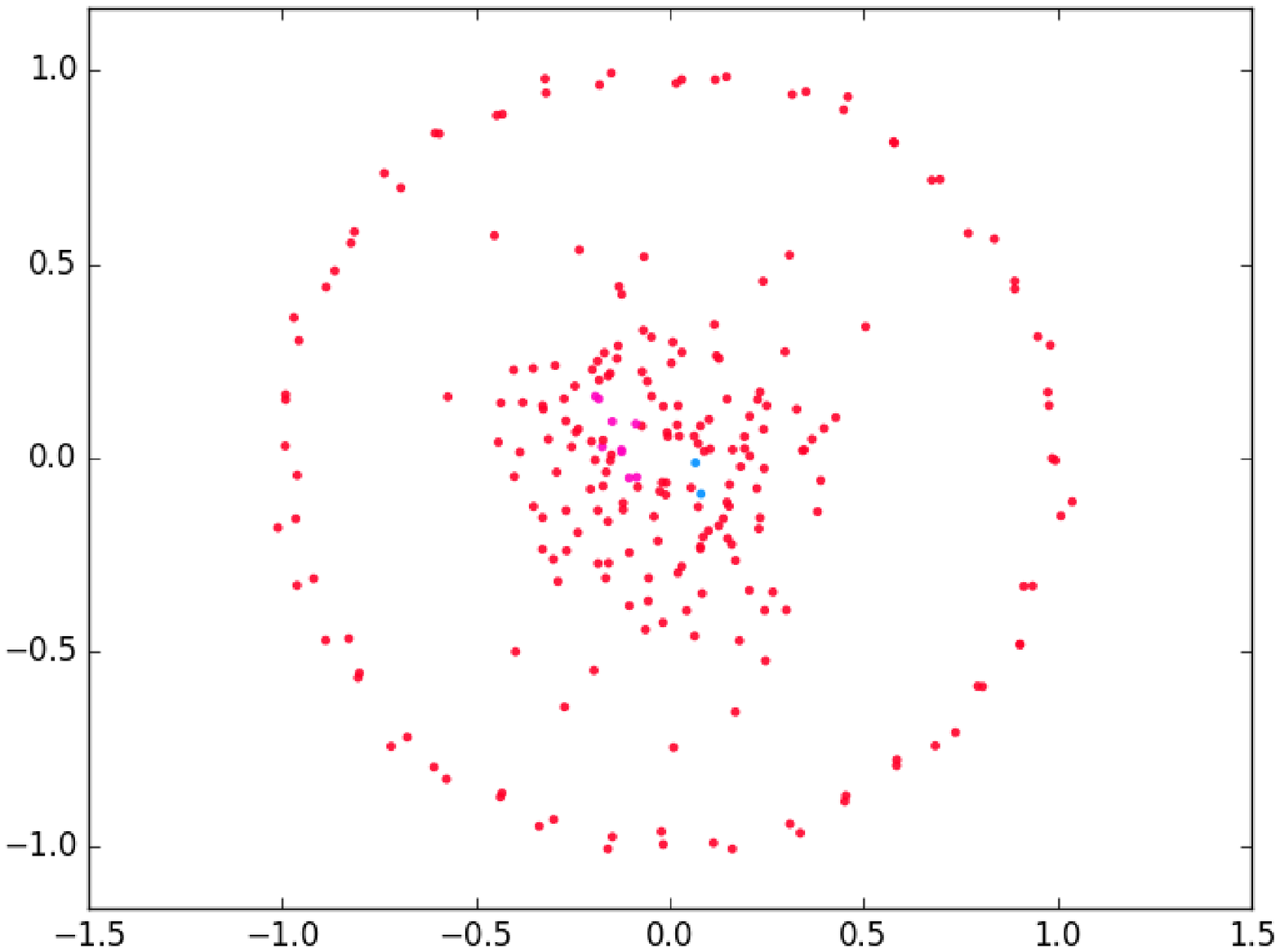}}
	\subfloat[$B$, $\minpoints = 5$]{\includegraphics[width=.5\columnwidth,trim={0 0 0 1.2cm},clip]{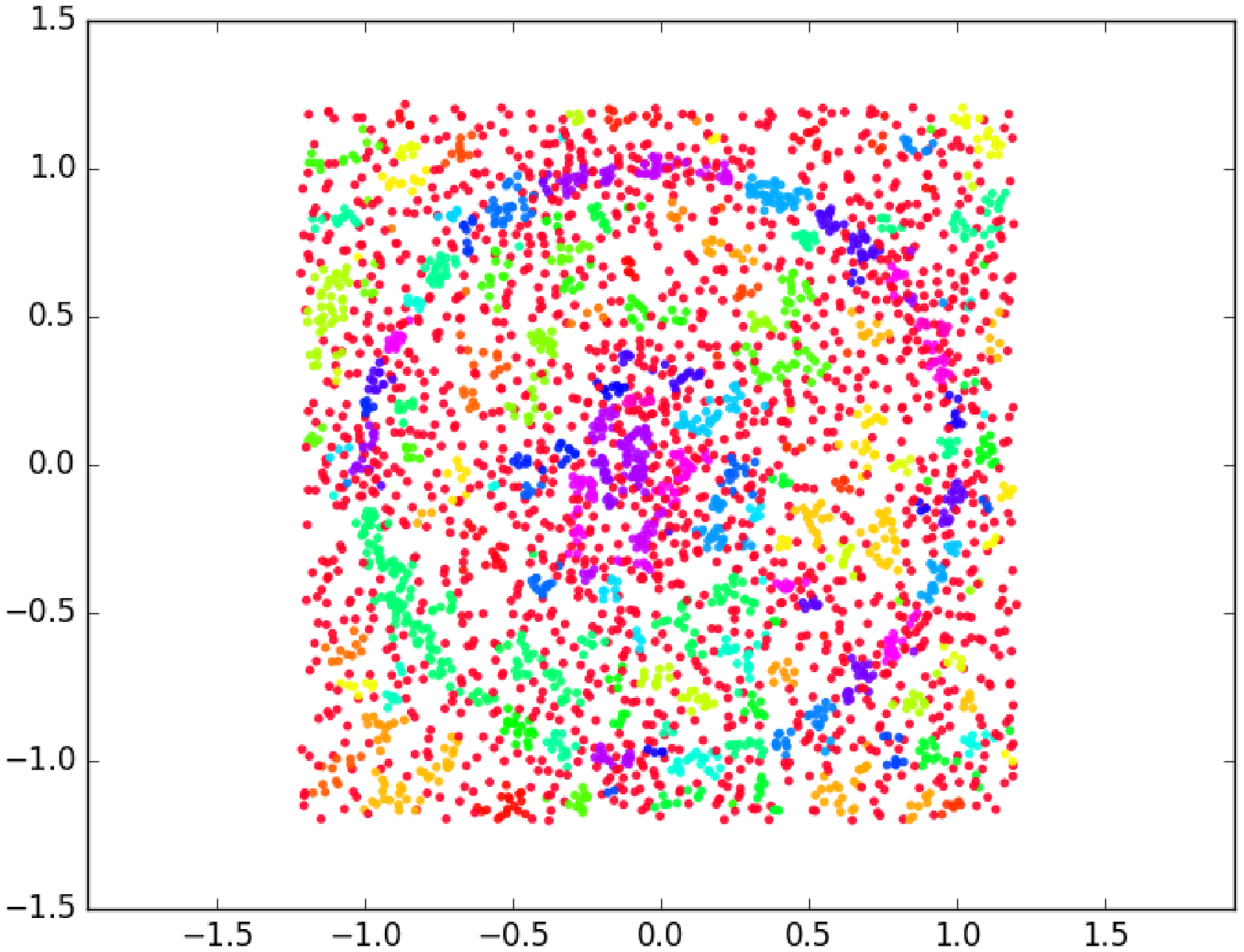}}
	\subfloat[$B$, $\minpoints = 25$] {\includegraphics[width=.5\columnwidth,trim={0 0 0 1.2cm},clip]{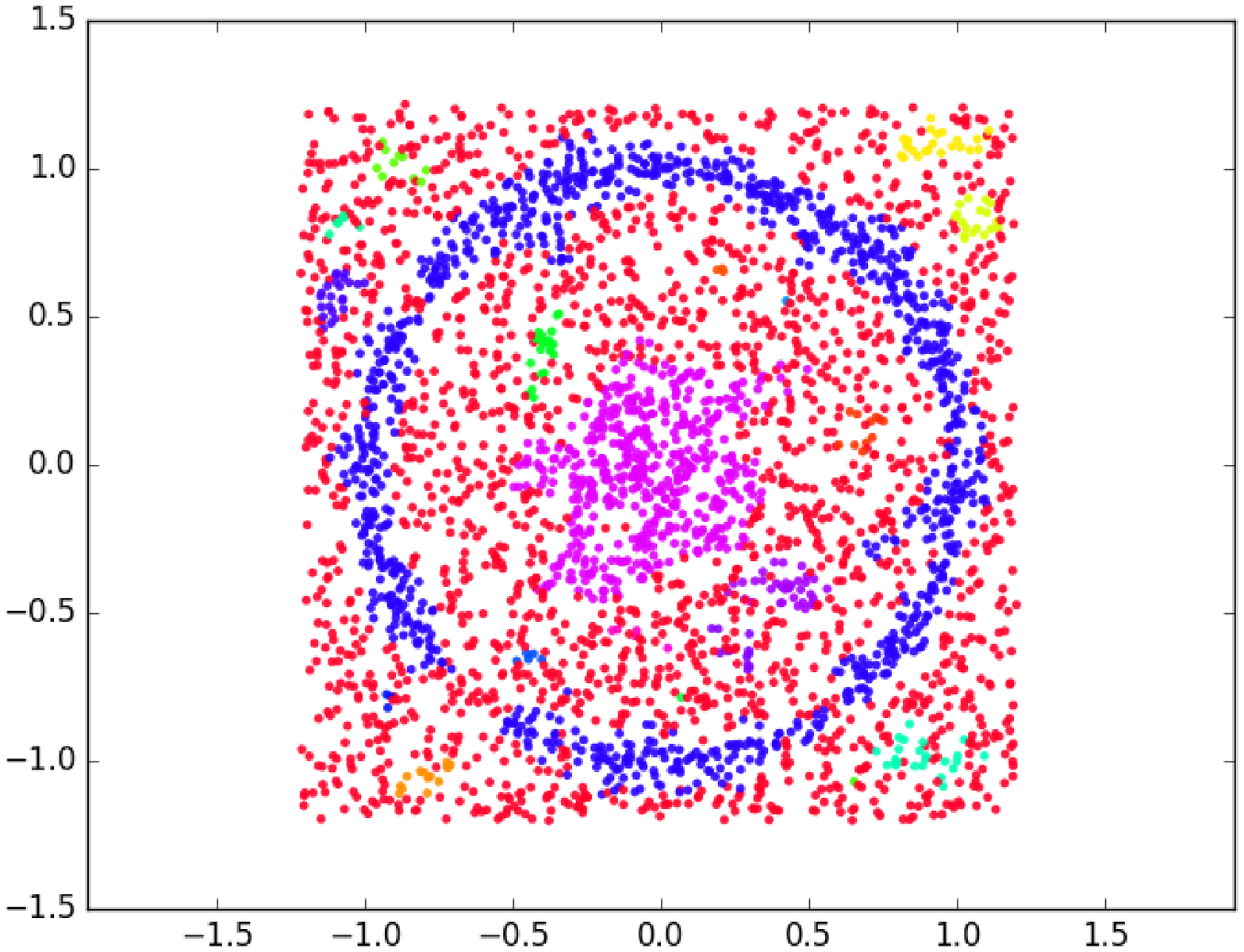}}
    \caption{Clusters from datasets $A$ and $B$ for $\minpoints = 5$ and $25$. Noise points are colored in red in all plots.} 
    \label{fig:example}
\end{figure*} 

To analyze clustering structures in practice, users typically run HDBSCAN* (like other algorithms with a parameter) multiple times with several different $\minpoints$ values, and explore the resulting hierarchies. 
Ideally, one would want to analyze cluster structures w.r.t. a large range of $\minpoints$ values, to fully explore a dataset in a given application.
A larger range of HDBSCAN* solutions for a multiple values of $\minpoints$ values offers greater insight into a dataset, also
providing additional opportunities for exploratory data analysis. For instance, using internal cluster validation measures such as DBCV \cite{DBLP:conf/sdm/MoulaviJCZS14}
, one can identify promising density levels from different hierarchies, produced from different tunings of the algorithm's density estimates (based on $\minpoints$).

However, one is typically constrained by the required runtime for running HDBSCAN* once for each desired value of $\minpoints$, resulting from the fact that HDBSCAN* is based on computing a Minimum Spanning Tree (MST) for a \emph{complete} graph, for a given value of $\minpoints$.
Even though this complete graph does not need to be explicitly stored, it has $O(n^2)$ edges (for $n$ data points) whose weights depend on $\minpoints$. For each desired value of $\minpoints$, these weights have to be re-computed and an MST has to be constructed for the corresponding complete graph. We note that the computational cost for the MST construction depends on the number of edges in the input graph, $O(n^2)$ in this case.

As the main contribution in this paper we provide theoretical and practical results that lead us to a method for computing multiple hierarchies w.r.t a range of $\minpoints$ values $(\mptsrange)$,
which is much more efficient than re-running HDBSCAN* for each $\minpoints$ in this range.
This gives access to a large range of HDBSCAN* solutions for a low computational cost, in fact equivalent to the cost of running HDBSCAN* for only 1 or 2 values of $\minpoints$. 
To achieve that, we show the following:
\begin{enumerate}
\item The smallest known neighborhood graph that contains the Euclidean Minimum Spanning Tree (EMST) is the relative neighborhood graph (RNG) --- as a first step towards finding a small, single spanning subgraph that can replace the complete graph in HDBSCAN*, while maintaining HDBSCAN*'s correctness.
\item The proximity measure used in HDBSCAN*, which depends on $\minpoints$, can be used to define RNGs that can replace the complete graph in HDBSCAN* with a corresponding RNG, one for each value of $\minpoints$.
\item For a range of $\minpoints$ values, RNGs w.r.t. smaller values are contained in RNGs w.r.t. larger values of $\minpoints$, so that a {\em single} RNG is sufficient to compute the hierarchies for the whole range of $\minpoints$ values.
\item Information (related to ``core-distances'') that is needed in HDBSCAN* and that can be computed in a pre-processing step, allows us to  formulate a highly efficient strategy for computing the single RNG, suitable for a whole range of $\minpoints$ values. 
\end{enumerate}
These results allow us to replace the (virtual) complete graph
of the data, on which HDBSCAN* is based, with a single, pre-computed RNG 
that contains all the edges needed to compute the hierarchies for every value of $\minpoints \in [1, \maxk]$. Moreover, this RNG has typically much fewer edges than 
the complete graph so its initial construction cost is more than outweighed by the reduction in edge weight computations.


The remainder of this paper is organized as follows.
Section \ref{sec:relatedwork} discusses related work. Section \ref{sec:background} covers basic concepts and techniques used in this paper. Section \ref{sec:approach} presents our proposal and proves  its correctness. Section \ref{sec:experiments} shows and discusses the results of our experimental evaluation. Section \ref{sec:conclusion} addresses the conclusions and some directions for future work.

%
%
\section{Related Work}
\label{sec:relatedwork}

To the best of our knowledge, there is no previous study of computing multiple clustering hierarchies efficiently. 
%
There has been work on automatic parameter selection strategies for density-based clustering, \emph{e.g.}, \cite{DBLP:conf/icebe/ChenMZW08,6249802,karami2014choosing}, which are loosely related to the issue illustrated in Figure~\ref{fig:example}. However, those proposals are unsuitable to be used with HDBSCAN*, since they were developed for non-hierarchical clustering algorithms.  In addition, they rely on assumptions that are often not satisfied in practice and there is not enough  evidence to support their claims about parameter optimality.

If we denote the HDBSCAN*'s (virtual) complete graph for a given $\minpoints$ by $\mrg$, a line of work related to our goal
of reducing the cost for computing an MST for each $\mrg$, 
are the works regarding (1) dynamically updating graphs, specifically MSTs, and (2)  neighborhood graphs that could potentially replace HDBSCAN*'s (virtual) complete graph.

%
For instance, the works of 
\cite{DBLP:journals/dam/CattaneoFPI10, DBLP:conf/icalp/HenzingerK97, DBLP:conf/soda/AmatoCI97}, studied the problem of maintaining dynamic MSTs. 
However, these approaches more suitable when the changes in the underlying graph 
take place sequentially, that is, considering each operation (\emph{e.g.}, edge updates) individually. 
When it comes to major changes taking place globally and simultaneously across the entire graph, as opposed to a few localized changes, a sequence of applications of these techniques tends to be computationally very costly, possibly even more costly than the construction of the entire MST from scratch. This is the case for $\mrg$, which is a complete graph whose majority of edges will likely change as a result of a change in the $\minpoints$ value.

%
The works on neighborhood graphs that are most related to our proposal aim at speeding up the secial case of computing a \emph{Euclidean} Minimum Spanning Tree (EMST), by first computing a spanning subgraph that is guaranteed to contain all the EMST edges. One of these strategies uses a Delaunay Triangulation \cite{citeulike:12792239} of the complete Euclidean graph $G$, since it has been shown that the EMST is contained in the Delaunay Triangulation of $G$ \cite{DBLP:journals/pr/Toussaint80}. Other spanning subgraphs of the complete graph $G$ that contain the EMST are the Gabriel Graph \cite{Gabriel01091969, citeulike:3982779} and the Relative Neighborhood Graph (RNG) \cite{DBLP:journals/pr/Toussaint80}. 
Unfortunately, these results are not simply applicable to our problem because $\mrg$ lies in a transformed space of the data that depends on $\minpoints$ ($\mrg \neq G$), and it is one of the main contributions of this paper to formally show how to adapt an RNG so that it can be use by HDBSCAN* as a suitable replacement for $\mrg$ for different $\minpoints$.

%
%
\section{Background}
\label{sec:background}


\subsection{HDBSCAN*}
\label{sec:hdbscan*}
HDBSCAN* is a hierarchical, density-based clustering algorithm that improves on previous density-based algorithms \cite{DBLP:conf/pakdd/CampelloMS13}. 
Its main output is a cluster hierarchy that describes the nested structure of density-based clusters in a dataset with respect to a single parameter, $\minpoints$, which can be seen as a smoothing factor that can affect how and if certain clusters are represented in the hierarchy. A specific level at some ``height'' $\varepsilon$ in this hierarchy represents a density level, specified by $\varepsilon$ and $\minpoints$, in terms of density-based clusters and noise, defined as follows: i) a point is either a \emph{core point} w.r.t. $\varepsilon$ and $\minpoints$ \emph{iff} it has at least $\minpoints$ many points in its $\varepsilon$-neighborhood, or a \emph{noise point} otherwise; ii) two core points are (directly) \emph{$\varepsilon$-reachable} w.r.t. $\minpoints$ if they are within each other's $\varepsilon$-neighborhood; iii) two core points are \emph{density-connected} w.r.t. $\varepsilon$ and $\minpoints$ if they are directly or transitively $\varepsilon$-reachable w.r.t. $\minpoints$; and iv) a \emph{cluster} w.r.t. $\varepsilon$ and $\minpoints$ is a (non-empty) maximal subset of density connected points w.r.t. $\varepsilon$ and $\minpoints$.
To determine the \emph{nested} structure of density-based clusters in a dataset $\textbf{X}$, w.r.t. $\minpoints$, one needs to know (i) for each point $p \in \textbf{X}$: the smallest value of $\varepsilon$ such that $p$ is a core point w.r.t. $\varepsilon$ and $\minpoints$, called $p$'s ``core-distance''
w.r.t. $\minpoints$; and (ii) for each value of $\varepsilon$: the clusters and the noise w.r.t. $\varepsilon$ and $\minpoints$.
The latter information can be derived conceptually from a complete, edge-weighted graph where nodes represent the points in $\textbf{X}$, and the \emph{edge weight} of an edge between two points $p$ and $q$ --- called the ``mutual reachability distance'' (w.r.t. $\minpoints$) between $p$ and $q$ --- is the smallest value of $\varepsilon$ such that $p$ and $q$ are (directly) $\varepsilon$-reachable w.r.t. $\minpoints$. This graph is called the ``Mutual Reachability Graph'', $\mrg$, which forms the conceptual basis of HDBSCAN*, and which we will discuss in more technical detail in the following subsection. For a specific density level ($\varepsilon$ and $\minpoints$), removing all edges from $\mrg$ with weights greater than $\varepsilon$ reveals the maximal, connected components, i.e., clusters, of that density level. The density-based clustering hierarchy can thus be compactly represented by (and more easily be extracted from) a Minimum Spanning Tree (MST) of $\mrg$. 

The HDBSCAN* hierarchy w.r.t. $\minpoints$ for a dataset \textbf{X} is computed in the following way:
First, the core distances of all points in \textbf{X} w.r.t. $\minpoints$ are computed.
Then, an MST of $\mrg$ is dynamically computed (without materializing $\mrg$).
From this MST, the \emph{complete} density-based cluster hierarchy w.r.t. $\minpoints$ is then extracted, by removing edges from the MST in descending order of edge weight, and \mbox{(re-)labeling} the connected components and noise at the resulting ``next'' level. 

\subsection{Mutual Reachability Graphs}
\label{sec:mrg}
The $\mrg$ is a complete, edge-weighted graph that represents the mutual reachability relationship between any two points in a dataset \textbf{X}. Each point in \textbf{X} corresponds to a vertex, and between each pair of points $p$ and $q$, there is an edge whose weight is defined as the Mutual Reachability Distance between $p$ and $q$ w.r.t. $\minpoints$, $\mrd$ \cite{DBLP:conf/icdm/LelisS09}:
\begin{equation} \label{eq:mrd}
\mrd(p, q) = max\{\cd(p), \cd(q), d(p, q)\}
\end{equation}
where $d(\cdot, \cdot)$ represents the underlying distance function (typically Euclidean distance), and $\cd(p)$ represents the core distance of $p$, which is formally the distance from $p$ to its $\minpoints$ nearest neighbor, $\minpoints\operatorname{-NN}(p)$:
\begin{equation}
\cd(p) = d(p, \minpoints\operatorname{-NN}(p))
\end{equation}

\noindent In this work, we assume that the underlying distance $d(\cdot, \cdot)$ satisfies Symmetry and Triangle Inequality, and, without loss of generality, we use Euclidean Distance in our examples.

Intuitively, an edge weight in $\mrg$ corresponds to the minimum radius $\varepsilon$ at which the corresponding endpoints are directly $\varepsilon$-reachable w.r.t. $\minpoints$, \emph{i.e.}, the smallest distance at which both points are in each other's $\varepsilon$-neighborhood, and both $\varepsilon$-neighborhoods contain at least $\minpoints$ points.

The Mutual Reachability Graph has the following important characteristics related to $\mrd$ and to how these edge weights change when changing the value of $\minpoints$: 1) Increasing the value of $\minpoints$ leads, in general, to higher values of $\cd$, since $\cd$ is the $\minpoints$-th nearest neighbor distance; 2) When increasing the value of $\cd$, more edges will have the same edge weight, since a point $p$ with a high $\cd$ determines the weight of all edges between $p$ and its $\minpoints$-nearest neighbors that have a smaller $\cd$ than $p$ (given that $\mrd$ is defined by a max function); 3) When decreasing the value of $\minpoints$, edge weights can either decrease or remain the same, but never increase.

The authors of HDBSCAN* \cite{DBLP:conf/pakdd/CampelloMS13}, \cite{DBLP:journals/tkdd/CampelloMZS15} deem $\mrg$ a conceptual graph as it does not need to be explicitly stored; edge weights can be computed on demand, when needed.  

%
%
\section{Approach}
\label{sec:approach}

At the core of HDBSCAN* is the computation of an MST from the $\mrg$ of a dataset.
The time needed to compute an MST depends on the number of edges of the input graph, which in case of $\mrg$ is a complete graph. Even if $\mrg$ is not materialized, $O(n^2)$ edge weights have to be processed for a dataset with $n$ points. 

When HDBSCAN* has to be run for a range, $\mptsrange$, of $\minpoints$ values, many MSTs have to be computed for different $\mrg$ graphs, one for each value of $\minpoints \in \mptsset$. Conceptually, we can think of each $\mrg$ being obtained by taking the complete, unweighted graph $G$ of the dataset, and then incorporate into it appropriate edge weights, which means that edge weights of $\mrg$ have to be re-computed for each $\minpoints \in \mptsset$ before an MST is  constructed.

One way to speed up the execution time of HDBSCAN* over all values of $\minpoints \in \mptsset$, even with a naive approach of re-running HDBSCAN* for each $\minpoints$ value, is to execute $k$-Nearest-Neigbor ($k$-NN) queries for each point only once, using the largest value $\maxk$ in the range. When computing the core distance of a point $p$ w.r.t. $\minpoints = \maxk$ using a $\maxk$-NN query, the information about \emph{all smaller} core distances of $p$ (\emph{i.e.}, w.r.t. $\minpoints = k_j$, where $j \leq max$), is readily available as part of the $\maxk$-NN query computation. Hence, the core distances for all values of $\minpoints \in \mptsset$ can be pre-computed and stored so that the re-computation of edge weights (reachability distances) for the different $\mrg$ graphs does not require additional $k$-NN-queries.
However, even with pre-computed core distances, a major factor determining the total runtime of HDBSCAN*, over all values of $\minpoints \in \mptsset$, is the large number of edges that have to be processed in the MST constructions
for each value of $\minpoints$.

To reduce the number of edges that have to be processed, we can ask the questions: is it possible to construct a single graph that is significantly smaller than a complete graph, and that contains all the edges needed to compute the MST of $\mrg$ for all $\minpoints \in \mptsset$? If the answer is \emph{yes}, we can use this graph instead of the complete graph in HDBSCAN*, without changing the correctness of the result: we can just re-compute its edge weights instead of the edge weights of the complete graph, for each value of $\minpoints \in \mptsset$, and compute the MST of \emph{this} edge-weighted graph. In the following, we will formally prove that one of the known graphs can be adapted so that it can be used in our approach to running HDBSCAN* for each value of $\minpoints \in \mptsset$. How much speed-up can be achieved in this manner depends, however, not only on the reduction in number of edges from the complete graph, but also on the added computational cost for constructing this graph. 

\subsection{Results from Computational Geometry}

Consider first the special case of $\minpoints = 1$, where all core distances are equal to zero, and thus the mutual reachability distance $\mrd$ reduces to the underlying distance function. With Euclidean distance, what HDBSCAN* has to compute then is the Euclidean Minimum Spanning Tree (EMST) of a dataset $\textbf{X}$, i.e., the MST of a complete graph of $\textbf{X}$ (containing an edge between every pair of points/vertexes) with Euclidean distance between points as edge weights. 

For the EMST, there are known results from computational geometry that relate the EMST to some of the so-called \emph{proximity graphs}, in which two points are connected by an edge whenever a certain spatial constraint is satisfied. The most popular ones are the Delaunay Triangulation (DT), the Gabriel Graph (GG) and the Relative Neighborhood Graph (RNG), for which it has been shown that \cite{DBLP:journals/pr/Toussaint80}:
\begin{equation}
EMST \subseteq RNG \subseteq GG \subseteq DT \label{eq:dt-gg-rng-emst}
\end{equation}
The RNG and GG are special cases of a family of graphs called $\beta$-skeletons \cite{kirkpatrick1984framework}, which can range from the complete graph to the empty graph, when $\beta$ goes from 0 to $\infty$.
A value of $\beta = 1$ results in the GG and $\beta = 2$ results in the RNG.

Given this result, the RNG, or possibly a $\beta$-skeleton with even fewer edges, may be a good replacement for a complete graph, if we can answer the following questions positively:
\begin{enumerate}
\item Can we determine the smallest $\beta$-skeleton, in terms of number of edges, that contains the EMST as a subgraph?
\item Can the results we have for Euclidean distance
be generalized to other reachability distances w.r.t. $\minpoints > 1$? 
\item Is there a single $\beta$-skeleton that contains all the edges needed to compute an MST of $\mrg$ for each value of $\minpoints$ in a range of values $\mptsrange$?
\item Does the reduction in the number of edges justify the additional computational cost for constructing and materializing a $\beta$-skeleton for our task? 
\end{enumerate}
We will answer these questions in the following subsections.

\subsection{The Smallest $\beta$-Skeleton Containing the EMST}
\label{sec:RNG}
The family of $\beta$-skeletons for a set of $d$-dimensional points is defined in the following way. 
For a given $\beta$, an edge exists between two points $a$ and $b$ if 
the intersection of the two balls centered at $(\beta/2)a + (1- \beta/2)b$ and $(1-\beta/2)a+(\beta/2)b$, both with radius $\beta d(a,b)/2$, is empty.
For instance, when $\beta = 2$ (the case of an RNG), the centers of the two balls coincide with the points $a$ and $b$, and their radius is equal to $d(a, b)$, as illustrated in Figure \ref{fig:beta2}. The highlighted region, called $lune(a,b)$, must be empty for $a$ and $b$ to be connected via an edge. 
For $\beta = 2$, one can equivalently say that an edge exists between $a$ and $b$ if 
\begin{equation}
d(a, b) \leq max\{d(a, c), d(b, c)\}, \forall c \ne a, b
\label{eq:rng_edge}
\end{equation}



	
				
	
	



The RNG is guaranteed to contain the EMST, which has been shown in \cite{DBLP:journals/pr/Toussaint80}.
The essence of the proof can be demonstrated considering a configuration of three points $a$, $b$, $c$, such that $lune(a,b)$ contains $c$, as shown in Figure \ref{fig:k}.
The edges $(a, b)$, $(a, c)$ and $(b, c)$ cannot all be part of an EMST, as they form a cycle.
Since $(a, b)$ is the largest of these edges, $(a, b)$ cannot be part of the EMST. Thus, a necessary (yet not sufficient) condition for an edge $(a,b)$ to be in an EMST is that all other points must lie outside $lune(a,b)$. In other words, we can remove all edges $(a,b) \in E$ from the complete graph $G$ whose points $a$ and $b$ have a non-empty lune, and be sure that the resulting graph $(V, E \setminus \{(a,b): lune(a,b) \neq \emptyset\}) = RNG$ contains the EMST.
Here, we prove by counter example that the RNG is actually the smallest $\beta$-skeleton graph (i.e., there is no $\beta > 2$) with that property. 
Consider a dataset with three points $a$, $b$ and $c$, located at equal distance from each other, as illustrated in Figure \ref{fig:beta}:
When $\beta = 2$ (Figure \ref{fig:beta2}), according to Inequality (\ref{eq:rng_edge}), there is an edge between every pair of points in the $2$-skeleton of this dataset. 
For any $\beta > 2$ (Figure \ref{fig:beta3}), however, the radius of the balls that define $lune(a, b)$ is increased by a factor of $(\beta-2)/2$, and the centers of the balls  are ``pulled apart'' accordingly, so that $c$ (equidistant from $a$ and $b$) must now be inside $lune(a,b)$. Thus, $a$ and $b$ are (by definition of $\beta$-skeleton) no longer connected by an edge.
Analogously, there is no edge between the other pairs of points for $\beta > 2$, resulting in an empty $\beta$-skeleton that obviously cannot contain the EMST (i.e., $EMST \nsubseteq$ $\beta$-skeleton $\forall \beta > 2$). From this result and from the known result in Expression (\ref{eq:dt-gg-rng-emst}), the RNG ($\beta = 2$) is thus the smallest $\beta$-skeleton graph that contains the EMST and, for this reason, we choose it as the basis for further analysis. 

\begin{figure}[t!]
	\centering
    \adjustbox{valign=b}{
	\subfloat[$\beta = 2$]{%
		\centering
		\def\A{(1,1) circle (1)}
		\def\B{(2,1) circle (1)}		
		\begin{tikzpicture}
			\draw[thick,dashed] \A;
			\draw[thick,dashed] \B;	
			\fill (1,1) circle[radius=2pt] node[align=left, right] {$a$};
			\fill (2,1) circle[radius=2pt] node[align=right, left] {$b$};	
			\fill (1.5,1.86) circle[radius=2pt] node[align=left, below, inner sep=4pt] {$c$};
           	\begin{scope}
				\clip \A;
				\fill[pattern=north east lines] \B;
			\end{scope}

			\draw (2,-0.25) -- (2,-0.25);
		\end{tikzpicture}    
		\label{fig:beta2}
    }}
    \adjustbox{valign=b}{
    \subfloat[$\beta > 2$]{%
		\centering
		\def\C{(4.75,1) circle (1.25)}
		\def\D{(6.25,1) circle (1.25)}
		\begin{tikzpicture}
			\draw[thick,dashed] \C;
			\draw[thick,dashed] \D;
			\fill (4.75,1) circle[radius=0.6pt] node[align=left, right] {};
			\fill (6.25,1) circle[radius=0.6pt] node[align=left, right] {};
			\fill (5,1) circle[radius=2pt] node[align=left, right] {$a$};
			\fill (6,1) circle[radius=2pt] node[align=right, left] {$b$};	
			\fill (5.5,1.86) circle[radius=2pt] node[align=right, below, inner sep=4pt] {$c$};
			\draw (5,-0.25) -- (5,-0.25);
		\end{tikzpicture}
		\label{fig:beta3}
	}}
	\caption{$\beta$-skeletons}
	\label{fig:beta}
\end{figure}
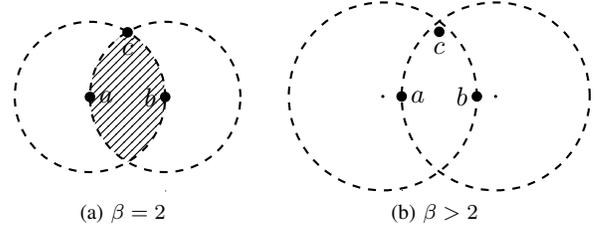

\subsection{The RNG w.r.t. Mutual Reachability Distance}
In this section, we prove that the results for RNGs in Euclidean space can be extended to the space of mutual reachability distances. 

\emph{Notation:} (1) Let $G = (V, E)$ denote the undirected, unweighted complete graph corresponding to a dataset, \emph{i.e.}, the set of vertexes $V$ represents the data points, and the set of edges $E \subset V \times V$ represents all pairs of vertexes/points. (2) Let $\mrg[i] = (V, E, \mrd[i])$ be the mutual reachability graph $\mrg$ for $\minpoints = i$, \emph{i.e.}, the weighted, complete graph for the dataset with edge weights equal to $\mrd[i]$, the mutual reachability distance w.r.t. $\minpoints = i$.

We can define a relative neighborhood graph w.r.t. the mutual reachability distance $\mrd[i]$, $\rng{i}$, as follows:

\begin{mydef}
$\rng{i} = (V, E')$ where $E' \subseteq E$ and there is an edge $(a,b) \in E'$ if and only if:
\begin{align*}
\mrd[i](a, b) \leq max\{\mrd[i](a, c), \mrd[i](b, c)\}, \forall c \ne a, b;
\end{align*} 
and when there is an edge $(a,b) \in E'$, we say that $a$ and $b$ are relative neighbors w.r.t $\mrd[i]$.
The unweighted graph $\rng{i}$ can be extended with edge weights defined by a distance function $\mrd[j]$, which results in the edge weighed graph $\rngi{j}{i}$, where the weight of an edge connecting two points $p$ and $q$ is equal to $\mrd[j](p,q)$. 
\label{def:RNG}
\end{mydef}

We can prove that the $\rngi{i}{i}$ contains the MST of $\mrg[i]$, and thus we can replace $\mrg[i]$ with $\rngi{i}{i}$ when running HDBSCAN* for $\minpoints = i$.

\begin{theorem}
	$MST(\mrg[i]) \subseteq \rngi{i}{i}$
	\label{theo:1}
\end{theorem}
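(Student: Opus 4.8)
The plan is to prove the statement edge-by-edge via its contrapositive: I would show that any edge $(a,b)$ that is \emph{not} present in $\rngi{i}{i}$ cannot belong to $MST(\mrg[i])$. Since both $\rngi{i}{i}$ and $\mrg[i]$ carry the same edge weights $\mrd[i]$, establishing this for every non-RNG edge immediately gives that every MST edge is an RNG edge, which is precisely $MST(\mrg[i]) \subseteq \rngi{i}{i}$.

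First I would unpack Definition \ref{def:RNG}: an edge $(a,b)$ is \emph{absent} from $\rngi{i}{i}$ exactly when its $\mrd[i]$-lune is non-empty, that is, when there is some vertex $c \ne a, b$ with
\[
\mrd[i](a, b) > max\{\mrd[i](a, c), \mrd[i](b, c)\},
\]
equivalently $\mrd[i](a,c) < \mrd[i](a,b)$ and $\mrd[i](b,c) < \mrd[i](a,b)$ hold simultaneously. Next, because $\mrg[i]$ is complete, the edges $(a,b)$, $(a,c)$, $(b,c)$ form a triangle (cycle) in $\mrg[i]$, and the two strict inequalities say that $(a,b)$ is the strict maximum-weight edge of that cycle. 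I would then invoke the cycle property of minimum spanning trees --- an edge that is strictly heavier than every other edge of some cycle lies in no MST --- to conclude $(a,b) \notin MST(\mrg[i])$. This is exactly the argument sketched for the Euclidean case after Expression (\ref{eq:dt-gg-rng-emst}), now read with $\mrd[i]$ in place of $d$.

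The step that needs the most care is arguing that this transfer from the Euclidean setting is legitimate. The key observation is that neither the RNG/lune condition nor the cycle property relies on any Euclidean-specific feature of the weights: both are purely order-theoretic statements about the values that $\mrd[i]$ assigns to edges. Hence the fact that $\mrd[i](p,q) = max\{\cd[i](p), \cd[i](q), d(p,q)\}$ (Equation (\ref{eq:mrd})) rather than an ordinary distance plays no role --- I would only note that $\mrd[i]$ is a symmetric weight (indeed a metric, since it inherits the triangle inequality from $d$), so the same combinatorial reasoning applies verbatim. A minor but essential point to state explicitly is that the inequalities are \emph{strict}, making $(a,b)$ the \emph{unique} maximum in the triangle; the cycle property then excludes it from \emph{every} MST, which is what the set inclusion in the statement requires.
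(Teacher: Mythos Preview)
Your argument is correct. The contrapositive plus the cycle property of MSTs on the triangle $\{(a,b),(a,c),(b,c)\}$ is exactly what is needed, and you are right that the strict inequalities coming from the negation of Definition~\ref{def:RNG} make $(a,b)$ the unique heaviest edge, so it is excluded from \emph{every} MST.

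The paper takes a slightly different route. Rather than arguing directly, it reduces the claim to Toussaint's classical result $EMST \subseteq RNG$ by verifying that $\mrd[i]$ satisfies the hypotheses Toussaint's proof uses: symmetry (immediate from Equation~(\ref{eq:mrd})) and the triangle inequality (established via a three-way case split on which term realizes the max in $\mrd[i](a,c)$). Your approach is more elementary and self-contained: it avoids the external citation and, more interestingly, shows that the triangle inequality is not actually needed for this particular inclusion --- the cycle property and the RNG condition are, as you say, purely order-theoretic. The paper's detour does buy something, though: it records that $\mrd[i]$ is a genuine metric, which justifies speaking of ``lunes'' in the $\mrd[i]$-space (used informally later, e.g.\ in the proof of Theorem~\ref{theo:2}) and makes the geometric intuition from Figure~\ref{fig:proof} literal rather than merely analogical.
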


	
	
	
		

	
	
		
	

\begin{proof}
The argument for why $EMST \subseteq RNG$, which has been shown in \cite{DBLP:journals/pr/Toussaint80}, relies in essence only on the fact that the Euclidean Distance is symmetric and satisfies the triangle inequality; it is, in fact valid for any distance function with these properties, which are needed to guarantee that $(a, b)$ is in fact the largest edge in configurations like the one shown in Figure \ref{fig:k}. 
Consequently, we only need to show that $\mrd[i]$ satisfies symmetry and triangle inequality.

For symmetry, we can easily see from the definition of $\mrd$ in Equation (\ref{eq:mrd}) that $\mrd[i](a,b) = \mrd[i](b,a)$ (given that the underlying distance $d$ is symmetric by assumption). 

For the triangle inequality, we have to show that for all $a,b,c$ in a dataset $X$:
\begin{align}
	\mrd[i](a,c) \leq \mrd[i](a,b) + \mrd[i](b,c) \label{eq:inequality}
\end{align}

By assumption (Section \ref{sec:background}), the underling distance $d$ in the definition of $\mrd[i]$ satisfies the triangle inequality, \emph{i.e.}:
\begin{align}
	d(a,c) \leq d(a,b) + d(b,c) \label{eq:inequality_euclidean}
\end{align}

There are three cases according to the definition of $\mrd[i](a,c)$, in all of which the triangle inequality must hold:

1) $\mrd[i](a,c) = \cd[i](a)$. 
The $max$ function in the definition of $\mrd[i]$ implies $\cd[i](a) \leq \mrd[i](a,b)$. 
Hence, it follows that $\mrd[i](a,c) = \cd[i](a) \leq \mrd[i](a,b) \leq \mrd[i](a,b) + \mrd[i](b,c)$.

2) $\mrd[i](a,c) = \cd[i](c)$. 
Analogous to case 1).

3) $\mrd[i](a,c) = d(a,c)$. 
Since for any $x, y$ it holds that $x \leq max(x, y)$, we can replace the terms on the right side of Inequality (\ref{eq:inequality_euclidean}) with $max$ functions to obtain, $d(a,c) \leq max\{d(a,b), \cd[i](a), \cd[i](b)\} + max\{d(b,c), \cd[i](b), \cd[i](c)\} = \mrd[i](a,b) + \mrd[i](b,c)$, and hence, also in this case: $\mrd[i](a,c) = d(a,c) \leq \mrd[i](a,b) + \mrd[i](b,c)$.

Since $\mrd[i]$ satisfies symmetry and triangle inequality, it follows from \cite{DBLP:journals/pr/Toussaint80} that $\rngi{i}{i}$ contains the MST of $\mrg[i]$.
\end{proof} 

\subsection{One RNG To Rule Them All}

We have established that we can use $\rngi{i}{i}$ as a substitute for $\mrg[i]$ in HDBSCAN*.  
We will now show that all MSTs for HDBSCAN* hierarchies w.r.t. $\minpoints \in \mptsset$ can be obtained from the single graph $\rng{\maxk}$. For this we only need to show that $\rng{i} \subseteq \rng{\maxk}$, for all $i < \maxk$. If this property holds, we can use the single graph $\rng{\maxk}$ to compute the MST of any $\mrg[i]$ by first equipping $\rng{\maxk}$ with edge weights $\mrd[i]$, and then computing the MST of this edge-weighted graph $\rngi{i}{\maxk}$. Compared to the naive approach that uses the complete graph $G$ in this manner, we should be able to speed up the MST computations by using a graph that has typically much fewer edges.


\begin{theorem}
    $\rng{i} \subseteq \rng{\maxk}$, $\forall i < \maxk$.
	\label{theo:2}
\end{theorem}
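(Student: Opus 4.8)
The plan is to prove the containment edge-by-edge via its contrapositive: I will show that whenever an edge $(a,b)$ is \emph{absent} from $\rng{\maxk}$, it is also absent from $\rng{i}$ for every $i < \maxk$. By Definition \ref{def:RNG}, $(a,b) \notin \rng{\maxk}$ means there is a \emph{blocking} point $c$ that strictly violates the lune condition, i.e. $\mrd[\maxk](a,c) < \mrd[\maxk](a,b)$ and $\mrd[\maxk](b,c) < \mrd[\maxk](a,b)$. The goal is then to exhibit a blocker for $(a,b)$ at level $i$ as well, and I expect the \emph{same} point $c$ to work, so no new candidate needs to be constructed.

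The first ingredient is monotonicity of core distances in $\minpoints$: since $\cd[i](p)$ is the distance from $p$ to its $i$-th nearest neighbour, $\cd[i](p) \leq \cd[\maxk](p)$ whenever $i \leq \maxk$, and therefore $\mrd[i](p,q) \leq \mrd[\maxk](p,q)$ for every pair of points. This is exactly characteristic 3) noted in Section \ref{sec:mrg}. The second, and key, ingredient is the observation that a blocked edge must have its mutual reachability distance determined by the underlying distance $d$, not by a core distance. Concretely, the blocking inequalities force $\cd[\maxk](a) \leq \mrd[\maxk](a,c) < \mrd[\maxk](a,b)$ and $\cd[\maxk](b) \leq \mrd[\maxk](b,c) < \mrd[\maxk](a,b)$; since neither core distance can attain the maximum in $\mrd[\maxk](a,b) = max\{\cd[\maxk](a), \cd[\maxk](b), d(a,b)\}$, it must be that $\mrd[\maxk](a,b) = d(a,b)$. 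The same inequalities also yield $d(a,c) < d(a,b)$, $d(b,c) < d(a,b)$, and $\cd[\maxk](c) < d(a,b)$.

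With these facts the conclusion follows by evaluating the three relevant distances at level $i$. Using monotonicity, $\cd[i](a) \leq \cd[\maxk](a) < d(a,b)$ and $\cd[i](b) \leq \cd[\maxk](b) < d(a,b)$, so $\mrd[i](a,b) = d(a,b)$ as well. Likewise every term inside $\mrd[i](a,c) = max\{\cd[i](a), \cd[i](c), d(a,c)\}$ and inside $\mrd[i](b,c) = max\{\cd[i](b), \cd[i](c), d(b,c)\}$ is strictly below $d(a,b)$, so $\mrd[i](a,c) < \mrd[i](a,b)$ and $\mrd[i](b,c) < \mrd[i](a,b)$. Hence $c$ blocks $(a,b)$ at level $i$, giving $(a,b) \notin \rng{i}$, which is the desired contrapositive and establishes $\rng{i} \subseteq \rng{\maxk}$.

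The step I expect to be the crux is the reduction $\mrd[\maxk](a,b) = d(a,b)$, because it is what makes the \emph{same} blocker transfer downward. Without it one might worry that shrinking the core distances (and thus some $\mrd$ values) could turn a strict blocking inequality into an equality and spoil the lune violation; the reduction rules this out, since once the edge weight is pinned to the $\minpoints$-independent quantity $d(a,b)$, every competing term can only decrease or stay equal as $\minpoints$ drops, so the strict inequalities survive. I would therefore be careful to carry strict inequalities throughout and to invoke the core-distance monotonicity explicitly, as both are essential to preserving strictness.
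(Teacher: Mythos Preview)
Your proof is correct and follows essentially the same route as the paper's: both arguments pick a blocker $c$ for $(a,b)$ at level $\maxk$, deduce from the blocking inequalities that $\mrd[\maxk](a,b)=d(a,b)$ (the paper does this by the same case analysis you sketch), and then combine this with the monotonicity $\cd[i](\cdot)\le\cd[\maxk](\cdot)$ to see that the same $c$ blocks $(a,b)$ at level $i$. The only difference is packaging---you argue the contrapositive directly, while the paper phrases it as a proof by contradiction---but the substantive steps and the key reduction are identical.
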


\begin{proof}
To prove this by contradiction, assume that there is a $j < \maxk$ for which this property does not hold, \emph{i.e.}, $\rng{j} \not\subseteq \rng{\maxk}$.
Then, there must be at least one edge $(a,b)$ that belongs to $\rng{j}$ but does not belong to $\rng{\maxk}$. 
According to the condition that defines relative neighborhood graphs, this means that there is a point $c$, such that for distance $\mrd[\maxk]$, $c \in lune(a, b)$, and for distance $\mrd[j]$, $c \notin lune(a,b)$, as illustrated in Figure \ref{fig:proof}. 

\begin{figure}[!tb]
	\centering
    \adjustbox{valign=b}{
   	\subfloat[]{
		\centering
			
		\def\A{(1,1) circle (1)}
		\def\B{(2,1) circle (1)}
						
		\begin{tikzpicture}
			
		\draw[thick,dashed] \A;
		\draw[thick,dashed] \B;
						
		\fill (1,1) circle[radius=3pt] node[align=left, left] {$a$};
		\fill (2,1) circle[radius=3pt] node[align=right, right] {$b$};
						
		\fill (1.65,1.4) circle[radius=3pt] node[align=left, left] {$c$};
						
		\end{tikzpicture}
			
		\label{fig:k}
	}}
    \adjustbox{valign=b}{
	\subfloat[]{
		\centering
				
		\def\C{(5,1) circle (1)}
		\def\D{(6,1) circle (1)}
		
		\begin{tikzpicture}
				
		\draw[thick,dashed] \C;
		\draw[thick,dashed] \D;
				
		\fill (5,1) circle[radius=3pt] node[align=left, left] {$a$};
		\fill (6,1) circle[radius=3pt] node[align=right, right] {$b$};
		
		\fill (6,1.7) circle[radius=3pt] node[align=right, right] {$c$};
		
		\draw (5,1) -- (6,1);
		
		\end{tikzpicture}
		
		\label{fig:k-1}
	}}

	\caption{Illustration for proofs of Theorem \ref{theo:1} and \ref{theo:2}}
	\label{fig:proof}
\end{figure}
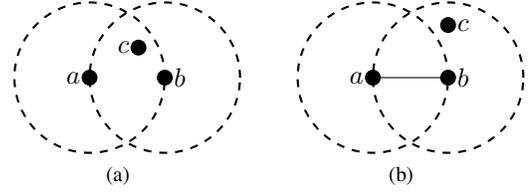

For $\rng{\maxk}$ (Figure \ref{fig:k}) this means that the following inequalities must \emph{both} be satisfied so that $c \in lune(a,b)$.
\begin{align}
	\mrd[\maxk](a,b) > \mrd[\maxk](a,c) \label{eq:1} \\
	\mrd[\maxk](a,b) > \mrd[\maxk](b,c) \label{eq:2}
\end{align}

For $\rng{j}$ (Figure \ref{fig:k-1}) this means that \emph{at least one} of the following inequalities must be satisfied so that $c \notin lune(a,b)$.
\begin{align}
	\mrd[j](a,b) \leq \mrd[j](a,c) \label{eq:3} \\
	\mrd[j](a,b) \leq \mrd[j](b,c) \label{eq:4}
\end{align}

Using the definition of $\mrd[\maxk]$, we can rewrite Inequalities \eqref{eq:1} and \eqref{eq:2} as follows:
\begin{align}
\begin{split}
		max\{\cd[\maxk](a), \cd[\maxk]&(b), d(a,b)\} \\
        &> \\
        max\{\cd[\maxk](a), \cd[\maxk]&(c), d(a,c)\} \label{eq:5} \\
\end{split}
\end{align}

\begin{align}
\begin{split}
		max\{\cd[\maxk](a), \cd[\maxk]&(b), d(a,b)\} \\ 
        &> \\
        max\{\cd[\maxk](b), \cd[\maxk]&(c), d(b,c)\} \label{eq:6}
\end{split}
\end{align}

There are theoretically three cases, $\cd[\maxk](a)$, $\cd[\maxk](b)$, and $d(a,b)$, that the $max$ function on the left-hand side of the Inequalities \eqref{eq:5} and \eqref{eq:6} can evaluate to. However, for both inequalities to be true simultaneously, only $d(a,b)$ is possible:

1) Case $max\{\cd[\maxk](a), \cd[\maxk](b), d(a,b)\} = \cd[\maxk](a)$. \\
In this case, we get from Inequality \eqref{eq:5} the following:
\begin{align*}
	\cd[\maxk](a) > max\{\cd[\maxk](a), \cd[\maxk](c), d(a,c)\}
\end{align*}
But since $max(\cd[\maxk](a), \ldots) \geq \cd[\maxk](a)$, it follows that $\cd[\maxk](a) > \cd[\maxk](a)$, a contradiction!

2) Case $max\{\cd[\maxk](a), \cd[\maxk](b), d(a,b)\} = \cd[\maxk](b)$. \\
In this case, analogously to the previous case, if follows from Inequality \eqref{eq:6} that $\cd[\maxk](b) > \cd[\maxk](b)$, a contradiction!
	
3) Case $max\{\cd[\maxk](a), \cd[\maxk](b), d(a,b)\} = d(a,b)$. \\
In this case there is no contradiction and, therefore, it is the only option to satisfy both \eqref{eq:5} and \eqref{eq:6} simultaneously.

Having established that the left-hand side of Inequalities \eqref{eq:5} and \eqref{eq:6} must be equal to $d(a,b)$, we can infer that all the following inequalities must hold.
\begin{align}
	d(a,b) &> \cd[\maxk](a) \label{eq:7} \\
    d(a,b) &> \cd[\maxk](b) \label{eq:8}  \\
	d(a,b) &> \cd[\maxk](c) \label{eq:9} \\
	d(a,b) &> d(a,c)   \label{eq:10} \\
	d(a,b) &> d(b,c)   \label{eq:11}
\end{align}

Let us now turn to the Inequalities \eqref{eq:3} and \eqref{eq:4} of which at least one must also hold, under our assumption that $c \notin lune(a,b)$ for distance $\mrd[j]$.  
We can rewrite \eqref{eq:3}, using the definition of $\mrd[j]$ as follows:
\begin{align}
  \begin{split}
          max\{\cd[j](a), \cd[j]&(c), d(a,c)\} \\ 
          &\geq \\
          max\{\cd[j](a), \cd[j]&(b), d(a,b)\} \label{eq:13}
  \end{split}
\end{align}

There are again three possible cases, $\cd[j](a), \cd[j](c), d(a,c)$, that the $max$ function on the left-hand side of Inequality \eqref{eq:13} can evaluate to, and we show that each one leads to a contradiction to what we already know about $a$, $b$, and $c$:

1) $max\{\cd[j](a), \cd[j](c), d(a,c)\} = \cd[j](a)$. \\
In this case, Inequality \eqref{eq:13} yields the following.
		\begin{align}
			\cd[j](a) &\geq d(a,b) \label{eq:15}
		\end{align}		
Since core distances $\cd$ can only increase when $\minpoints$ increases, we have $\cd[\maxk](a) \geq \cd[j](a)$ and, accordingly, we obtain the following from Inequality \eqref{eq:15}.
		\begin{align}
			\cd[\maxk](a) &\geq d(a,b) \label{eq:17}
		\end{align}
This contradicts Inequality \eqref{eq:7}!

2) $max\{\cd[j](a), \cd[j](c), d(a,c)\} = \cd[j](c)$. \\ Analogously to the previous case, from \eqref{eq:13} we get \eqref{eq:19}, and then from $\cd[\maxk](c) \geq \cd[j](c)$ we get \eqref{eq:21}, which contradicts Inequality \eqref{eq:9}!
\begin{align}
	\cd[j](c) &\geq d(a,b) 		\label{eq:19} \\
	\cd[\maxk](c) &\geq d(a,b) 	\label{eq:21}		
\end{align}

3) $max\{\cd[j](a), \cd[j](c), d(a,c)\} = d(a,c)$. \\
In this case, we get from Inequality \eqref{eq:13} that $d(a,c) \geq d(a,b)$, which is a contradiction to Inequality \eqref{eq:10}!

This proves that Inequality \eqref{eq:3} cannot hold under our assumption. We can prove analogously the same result for Inequality \eqref{eq:4}, which contradicts our assumption that there is a $j < \maxk$ such that $\rng{j} \not\subseteq \rng{\maxk}$.
Hence $\rng{i} \subseteq \rng{\maxk}$, $\forall i \leq \maxk$.
\end{proof}

When we combine the results of Theorems \ref{theo:1} and \ref{theo:2}, we obtain the following corollary, which states that the $MST(\mrg[i])$ for all $i < \maxk$ is contained in $\rng{\maxk}$, and can thus be obtained by extending $\rng{\maxk}$ with edge weights $\mrd[i]$, and computing the MST of this graph $\rngi{i}{\maxk}$.

\begin{corollary}
$MST(\mrg[i]) \subseteq \rngi{i}{\maxk}$, $\forall i \leq \maxk$.
\end{corollary}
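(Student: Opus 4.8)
The plan is to obtain the corollary purely by composing the two theorems already established, treating it as a transitivity statement about edge sets together with a consistency check on edge weights. First I would unwind the notation: $MST(\mrg[i])$ is the minimum spanning tree of the complete graph weighted by $\mrd[i]$, so its edge set is a collection of vertex pairs each carrying weight $\mrd[i]$, whereas the graph $\rngi{i}{\maxk}$ is the relative neighborhood graph $\rng{\maxk}$ whose edges are re-weighted by $\mrd[i]$. Hence proving $MST(\mrg[i]) \subseteq \rngi{i}{\maxk}$ amounts to showing that (a) every edge of $MST(\mrg[i])$ is an edge of $\rng{\maxk}$, and (b) the two graphs agree on the weight of each such shared edge.

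For part (a), I would simply chain the two containments. Theorem \ref{theo:1} gives $MST(\mrg[i]) \subseteq \rngi{i}{i}$, so in particular every edge of $MST(\mrg[i])$ lies in the (unweighted) structure $\rng{i}$. Theorem \ref{theo:2} gives $\rng{i} \subseteq \rng{\maxk}$ for all $i \leq \maxk$. Composing these edge-set inclusions yields $E(MST(\mrg[i])) \subseteq E(\rng{i}) \subseteq E(\rng{\maxk})$, which is exactly the structural containment required. (For the boundary case $i = \maxk$ the first inclusion is already an equality, so nothing extra is needed.)

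Part (b) is immediate once the bookkeeping on weights is made explicit: by definition the weights used in $MST(\mrg[i])$ are $\mrd[i]$, and by Definition \ref{def:RNG} the graph $\rngi{i}{\maxk}$ assigns weight $\mrd[i](p,q)$ to each of its edges $(p,q)$. Therefore any edge shared by the two graphs carries the same weight in both, and the inclusion established at the level of edge sets lifts to the weighted setting. Combining (a) and (b) gives $MST(\mrg[i]) \subseteq \rngi{i}{\maxk}$ for all $i \leq \maxk$.

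I do not expect a genuine obstacle here, since the mathematical substance is carried entirely by Theorems \ref{theo:1} and \ref{theo:2}; the only place that demands care is the notation, namely keeping the superscript (which fixes the RNG \emph{structure}) distinct from the subscript (which fixes the \emph{weighting} distance), and verifying that the weighting distance is $\mrd[i]$ consistently throughout, so that the result holds not merely as an inclusion of edge sets but as an inclusion of weighted subgraphs.
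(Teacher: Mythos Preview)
Your proposal is correct and follows essentially the same argument as the paper: apply Theorem~\ref{theo:1} to get $MST(\mrg[i]) \subseteq \rngi{i}{i}$, apply Theorem~\ref{theo:2} to get $\rng{i} \subseteq \rng{\maxk}$, re-weight both sides of the latter by $\mrd[i]$ to obtain $\rngi{i}{i} \subseteq \rngi{i}{\maxk}$, and chain the inclusions. Your additional bookkeeping in part~(b), making explicit that the weighting by $\mrd[i]$ is consistent throughout, is more careful than the paper's terse phrasing but adds no new idea; one small slip is that in your parenthetical about the boundary case $i=\maxk$ you presumably mean the \emph{second} inclusion (from Theorem~\ref{theo:2}, stated only for $i<\maxk$) becomes an equality, not the first.
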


\begin{proof}
$MST(\mrg[i]) \subseteq \rngi{i}{i}$ (Theorem \ref{theo:1}), and $\rng{i} \subseteq \rng{\maxk}$ (Theorem \ref{theo:2}).
By extending both graphs from Theorem \ref{theo:2} with edge weights $\mrd[i]$, we obtain $\rngi{i}{i} \subseteq \rngi{i}{\maxk}$. Hence, $MST(\mrg[i]) \subseteq \rngi{i}{\maxk}$.
\end{proof}

\subsection{RNG Computation}
\label{subsec:rng-computation}

The performance gain when running HDBSCAN* w.r.t. all values of $\minpoints \in \mptsset$ by using $\rng{\maxk}$ instead of the complete graph $G$ of a dataset relies on a number of factors: the complexity of the additional time to construct $\rngi{i}{\maxk}$ (recall that $G$ does not have to be explicitly constructed), the number of edges in $\rngi{i}{\maxk}$ compared to $G$, and the number of hierarchies $\maxk$ to be computed.

The naive way to compute an RNG for a set of points $\textbf{X}$ is to check for every pair of points $p$, $q$ $\in$ $\textbf{X}$ and each point $c$, whether $c$ is inside $lune(p,q)$. This algorithm runs in $O(n^3)$ time, which is inefficient for large datasets. More efficient strategies are surveyed in \cite{163414}. 

We adopt the approach in  \cite{DBLP:journals/comgeo/AgarwalM92} ---which has sub-quadratic expected time complexity under the assumption that points are in general position--- with an adaptation of the definition of well-separated pairs proposed in \cite{callahan1995decomposition}. This approach has three main steps.

In the first step, the entire dataset is decomposed recursively into smaller and smaller subsets (see \cite{callahan1995decomposition} for details), so that all pairs of obtained subsets $(A,B)$ are \emph{well-separated}. 
The notion of well-separability requires the smallest possible distance between any point $a \in A$ to any point $b \in B$ to be larger than the largest possible distance between points within each of the two sets. For efficiency reasons one does not compute pairwise distances, but instead uses ``safe'' bounds that can be efficiently computed to determine well-separability of two sets. The distance is in our case the mutual reachability distance $\mrd$, and the smallest possible $\mrd$ between two point sets $A$ and $B$ is, because of the max function in the definition of $\mrd$, the shortest possible Euclidean distance between a point $a \in A$ and a point $b \in B$. This distance $D(A,B)$ can be bounded, as in \cite{callahan1995decomposition}, by the distance between the smallest enclosing balls $B_A$ and $B_B$ around the minimum bounding hyper-rectangles enclosing $A$ and $B$, respectively. 
Then, we can define that $A$ and $B$ are well-separated if:
\begin{equation*}
\begin{split}
&D(A, B) \geq \\
&\quad s \cdot \max\{diameter(B_A), diameter(B_B), \max_{p \in A \cup B}(\cd(p))\} 
\end{split}
\end{equation*}
$max\{diameter(B_A), diameter(B_B), \max_{p \in A \cup B}(\cd(p))\}$ represents a bound on the largest possible mutual reachability distance within the sets $A$ and $B$. The separation factor $s > 0$ determines how far both sets have to be from each other to be considered \textit{well-separated}.
The larger the separation factor, the larger the number of generated pairs. For $0 < s <1$, there is no guarantee that the resulting graph will contain the MST edges, and
hence we adopt $s=1$.

	
	
		
		
		
		
		
	
	
	


In the second step, all the well-separated pairs are connected with edges such that a supergraph of the RNG, which we will call \unfilteredRNG, is obtained. 
For each pair $(A, B)$, the points $a_i \in A$ and $b_j \in B$ are connected with an edge if they are Symmetric Bichromatic Closest Neighbors (SBCN), \emph{i.e.}, if there is no other point in $B$ that is closer to $a_i$ than $b_j$ and vice versa. For example, in Figure \ref{fig:bcn}, the points $a_3$ and $b_3$ are SBCN and thus the edge $(a_3, b_3)$ is part of the \unfilteredRNG.

The third step of the RNG computation consists of filtering \unfilteredRNG\ to remove edges that are not in the RNG.
Although \unfilteredRNG\ has typically far fewer edges than the complete graph $G$, a naive filtering approach, which checks for each edge $(a,b)$ in \unfilteredRNG whether each point $c$ is in $lune(a,b)$, can be extremely time consuming for large datasets. 
Therefore, we propose an alternative strategy using information that is computed anyway for HDBSCAN*, which can make the overall filtering process more efficient. It is based on the intuition that points closer to $a$ or $b$ are more likely in $lune(a,b)$ than points that are farther away. For computing multiple HDBSCAN* hierarchies, we initially compute all the needed core distances by performing a single $\maxk$-nearest neighbor query for each point, which means that we do find the $\maxk$ closest points to each point. To support our pruning strategy, we only have to store in addition to the $\cd[i]$ values also the actual $\minpoints$-nearest neighbors. Using this information, for each edge $(a, b)$ with weight $w$, we first check if any of the $\minpoints$-nearest neighbor of $a$ and $b$ is inside $lune(a, b)$. As soon as we find one that is inside, we can safely remove the edge without further checking. If none of those neighbors is inside $lune(a, b)$, we check if $w$ is equal to the core-distance of $a$ or $b$. If that is the case (say for $a$), we know that no other point can be in $lune(a, b)$ (since $lune(a, b)$ is a subset of the ball around $a$ with radius $w$ and we have checked all points inside this ball); hence we know without further checking that the edge is in the RNG. We can choose to perform only these $2 \times \maxk$ checks per edge to obtain a graph, which we call \smartRNG, that is smaller than \unfilteredRNG\ but may contain more edges than the \exactRNG. To obtain the exact \exactRNG, we search the entire dataset whenever we cannot exclude or include an edge based on the $2 \times \maxk$ tests, to determine whether or not there is a point in $lune(a, b)$.

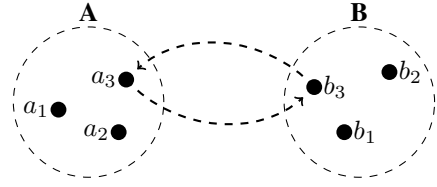
\begin{figure}[t]
	\centering
	
	\begin{tikzpicture}
	
	\draw[dashed] (1.6,1.4) circle [radius=1];
	
	\fill (1.2,1.3) circle[radius=3pt] node[align=left, left] {$a_{1}$};
	\fill (2,1) circle[radius=3pt] node[align=left, left] {$a_{2}$};
	\fill (2.1,1.7) circle[radius=3pt] node[align=left, left] {$a_{3}$};
	
	\draw[dashed] (5.2,1.4) circle [radius=1];
	
	\fill (5,1) circle[radius=3pt] node[align=right, right] {$b_{1}$};
	\fill (5.6,1.8) circle[radius=3pt] node[align=right, right] {$b_{2}$};
	\fill (4.6,1.6) circle[radius=3pt] node[align=right, right] {$b_{3}$};		
	
	\node(A) at (1.6,2.6){\textbf{A}};
	\node(B) at (5.2,2.6){\textbf{B}};
	
	\draw[->, shorten >=0.2cm, bend right=45, thick, dashed] 
	(2.1,1.7) to (4.6,1.6);
	
	\draw[->, shorten >=0.2cm, bend right=45, thick, dashed] 
	(4.6,1.6) to (2.1,1.7);
	\end{tikzpicture}
	\caption{Symmetric Bichromatic Closest Neighbor (SBCN)}
	\label{fig:bcn}
\end{figure}


\algrenewcommand\algorithmicindent{1.3em}%
\begin{algorithm}
\small
\caption{}
\label{alg:rng-hdbscan}
\begin{algorithmic}[1]
	\Require $X$: dataset; 
    		 $n$: $\left\vert{X}\right\vert$;
             $\mptssequence$: $\minpoints$ range;
             $T$: graph to be computed (RNG, RNG, RNG**);
	\For{$i \in \{1,...,n\}$} 							\label{line:core-distances-begin}
        \For{$j \in \mptsset$}
        	\State $M$[$i$][$j$] $\gets$ $(j^{th}$-NN$(i)$, $\cd[j](i));$ \label{core-distances}
        \EndFor
    \EndFor												\label{line:core-distances-end}
    \State 
    \State $wspd \gets WSPD(X, M);$ 					\label{line:wspd}
	\State 
    \For{$(A, B) \in wspd$} 									\label{line:construct-begin}
       	\State $\rng{\maxk} \gets \rng{\maxk} \cup SBCN(A,B);$  		\label{line:construct-end}
    \EndFor
	\State 
	\State $remove \gets \texttt{False};$
    \State 
	\If{$T \neq$ RNG**}												\label{line:test-rng**}
	\For{$(a, b) \in \rng{\maxk}$} 									\label{line:filter-begin}
          \For{$x \in$ $M$[$a$] $\cup$ $M$[$b$]}
              \If{$x \in lune(a,b)$}
                  \State $remove \gets \texttt{True};$
                  \State \textbf{break;}
              \EndIf
          \EndFor
          \If{$\neg$ $remove$}
			\If{$\mrd[\maxk](a,b) = max\{\cd[\maxk](a), \cd[\maxk](b)\}$}
                  \State $remove \gets \texttt{False};$
                  \State \textbf{continue;}            	
            \EndIf
          \EndIf
		\If{$\neg$ $remove$ and $T =$ RNG}
              \For{$x \in X$}									\label{line:sequential-begin}
                  \If{$x \in lune(a,b)$}
                      \State $remove \gets \texttt{True};$
                      \State \textbf{break;}
                  \EndIf
              \EndFor											\label{line:sequential-end}
          \EndIf

          \If{$remove$}
              \State $\rng{\maxk} \gets \rng{\maxk} \setminus (a,b);$
              \State $remove \gets \texttt{False};$
	    \EndIf
    \EndFor														\label{line:filter-end}
	\EndIf
	\State 
	\For{$\minpoints \in \mptsset$}								\label{line:hierarchies-begin}
        \State $MST_{\minpoints} \gets MST(\rngi{\minpoints}{\maxk});$
    	\State $compute\texttt{-}hierarchy(MST_{\minpoints});$
    \EndFor														\label{line:hierarchies-end}
\end{algorithmic}
\end{algorithm}

The pseudo-code for the overall strategy is shown in Algorithm \ref{alg:rng-hdbscan}.
It takes as input a dataset $\textbf{X}$ with $n$ points, a range of $\minpoints$ values, $\mptssequence$, and the type $T$ of the RNG to be computed. All the core-distances for each point $x \in \textbf{X}$ and its corresponding $k$-NN neighbors, for $k \in \mptsset$, are computed in Lines \ref{line:core-distances-begin}-\ref{line:core-distances-end}.
It is important to emphasize that a single $\maxk$-NN query is performed for each $x \in \textbf{X}$.
The statement in Line \ref{core-distances} illustrates the format of the entries of the matrix $M$.
Next, the Well-Separated Pairs Decomposition (WSPD) is performed in Line \ref{line:wspd}. 
In Lines \ref{line:construct-begin}-\ref{line:construct-end}, the \unfilteredRNG\ is constructed by adding one edge for each of the Symmetric Bichromatic Closest Neighbors (SBCN) between the pairs $(A, B) \in wspd$.
The edge filtering occurs between Lines \ref{line:filter-begin}-\ref{line:filter-end}. In case the \unfilteredRNG\ is chosen, the filtering process is completely skipped (Line \ref{line:test-rng**}). Otherwise (both \smartRNG\ and \exactRNG), the filter steps based on the $\maxk$-nearest neighbors are performed. The last filter, based on the sequential scan of the dataset (Lines \ref{line:sequential-begin}-\ref{line:sequential-end}), is only performed when the \exactRNG\ is to be computed.
At the end (Lines \ref{line:hierarchies-begin}-\ref{line:hierarchies-end}), the MSTs and hierarchies are computed for all the values of $\minpoints \leq \maxk$, using the computed RNG.

Note that, if one has to compute just a single MST from $G$, even though this operation is quadratic in the number of points since it depends on the number of edges in $G$, it may not pay off to first construct $\rng{\maxk}$. However, as the number of hierarchies that have to be computed increases, the initial overhead of constructing $\rng{\maxk}$ can substantially speed up the overall time to complete the $\maxk$ MSTs, if the number of edges in $\rng{\maxk}$ is much smaller than in $G$.

%
%
\section{Experiments}
\label{sec:experiments}

We conducted experiments to evaluate the efficiency of the proposed method with respect to changes in size and dimensionality of the dataset, and, most importantly, with respect to the number of hierarchies to be computed. 
We also show the \exactRNG, \smartRNG, and \unfilteredRNG\ sizes in comparison to the size of the $\mrg$, since the difference in the number of edges is the source of our performance gain.

To the best of our knowledge, there is no other strategy in the literature that aims at computing multiple hierarchies efficiently.
Thus, we compare our strategy to a straightforward baseline that runs HDBSCAN* multiple times, one for each $\minpoints$ value in the given range, but with the optimization of pre-computing the core distances for all points, in the same way we do in our approach (see Section \ref{sec:approach}), so that $kNN$ queries are only executed once and not for each value of $\minpoints$.

To study the computational trade-offs of the different edge filtering strategies described in Subsection \ref{subsec:rng-computation}, we show results for three variants: \unfilteredRNG-HDBSCAN*, which just uses the \unfilteredRNG\ without any additional filtering; \smartRNG-HDBSCAN*, which applies only the filtering based on $\maxk$ nearest neighbors; and \exactRNG-HDBSCAN*, which applies the complete filtering to obtain the exact \exactRNG. 

All methods have been implemented on top of the original HDBSCAN* code, provided by the authors of \cite{DBLP:journals/tkdd/CampelloMZS15}, in Java. The core-distances are computed with the aid of a $K$d-Tree index structure, adapted from \cite{kdtreeimplementation}. The experiments were performed in a virtual machine with 64GB RAM, running Ubuntu. For runtime experiments, we measure the total running time to compute core-distances and MSTs, and report the average runtime over 5 experiments.

The datasets were obtained using the generator proposed in \cite{handl2005cluster}, varying the number of dimensions from 2 to 128, the number of points from 16k to 1M, and the value of $\maxk$ from 2 to 128.  
Table \ref{tab:experiments} shows these values and indicates in bold the default value for each variable when other variables are varied.

\begin{table}[!t]
	\renewcommand{\arraystretch}{1.4}
	\caption{Experimental Setup}
	\label{tab:experiments}
	\centering
	\begin{tabular}{c|c}
		\toprule
		\textbf{Variables} & \textbf{Values}                         		   \\ \midrule
		\#points		    & 16k, 32k, 64k, \textbf{128k}, 256k, 512k, 1M	   \\
		\#dimensions      & 2, 4, 8, \textbf{16}, 32, 64, 128                \\
		$\maxk$        & 2, 4, 8, \textbf{16}, 32, 64, 128     		   \\ \bottomrule
	\end{tabular}
\end{table}




\subsection{Effect of Dataset Size}
\begin{figure*}[!htb]
	\centering
	\subfloat[\#points vs. Runtime]{\includegraphics[width=.33\textwidth]{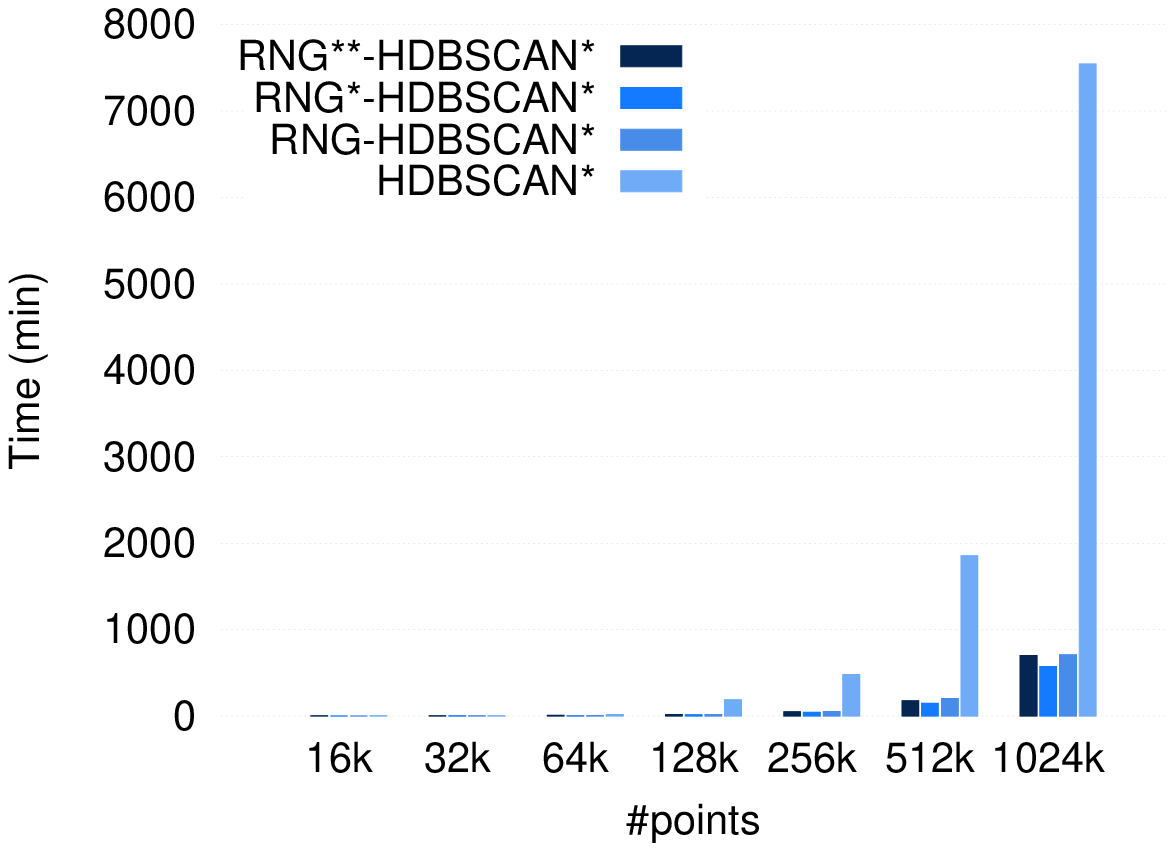} \label{fig:exp-dataset}}
	\subfloat[\#dimensions vs. Runtime]{\includegraphics[width=.33\textwidth]{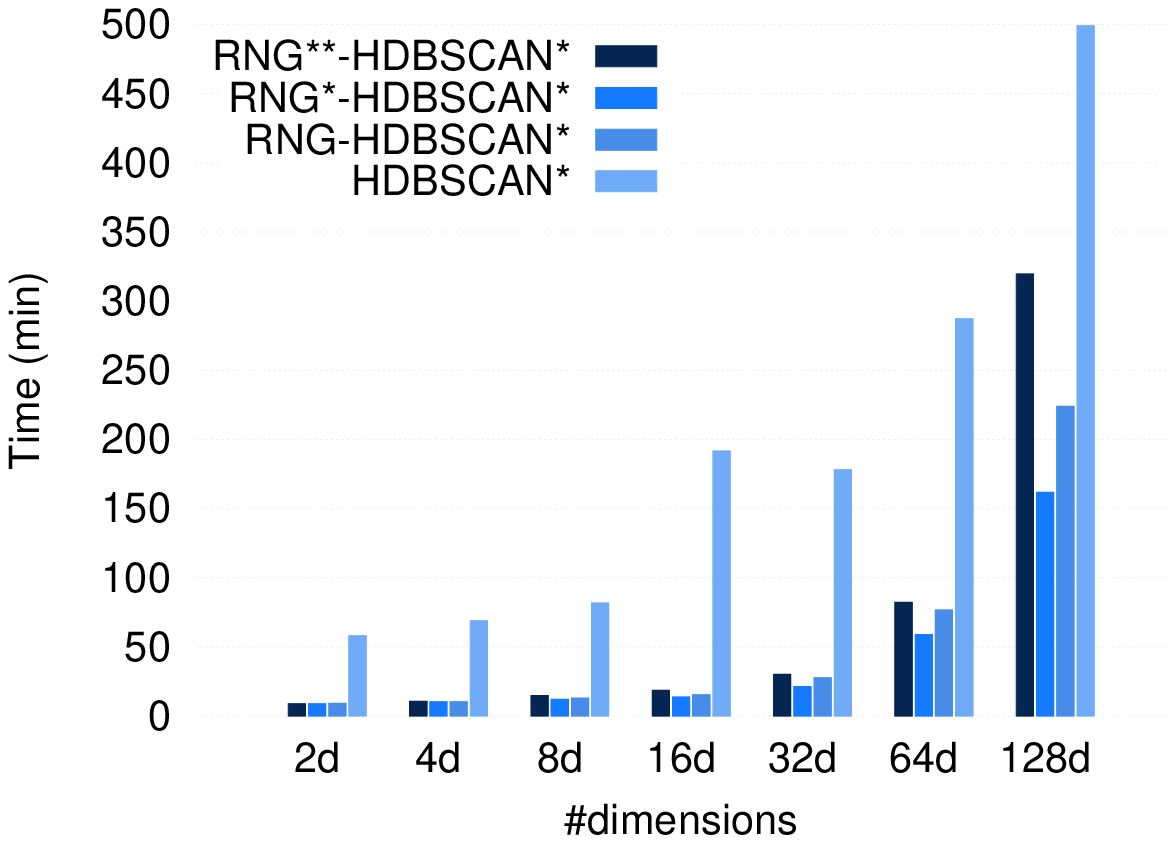} \label{fig:exp-dimensionality}}
	\subfloat[$\maxk$ vs. Runtime]{\includegraphics[width=.33\textwidth]{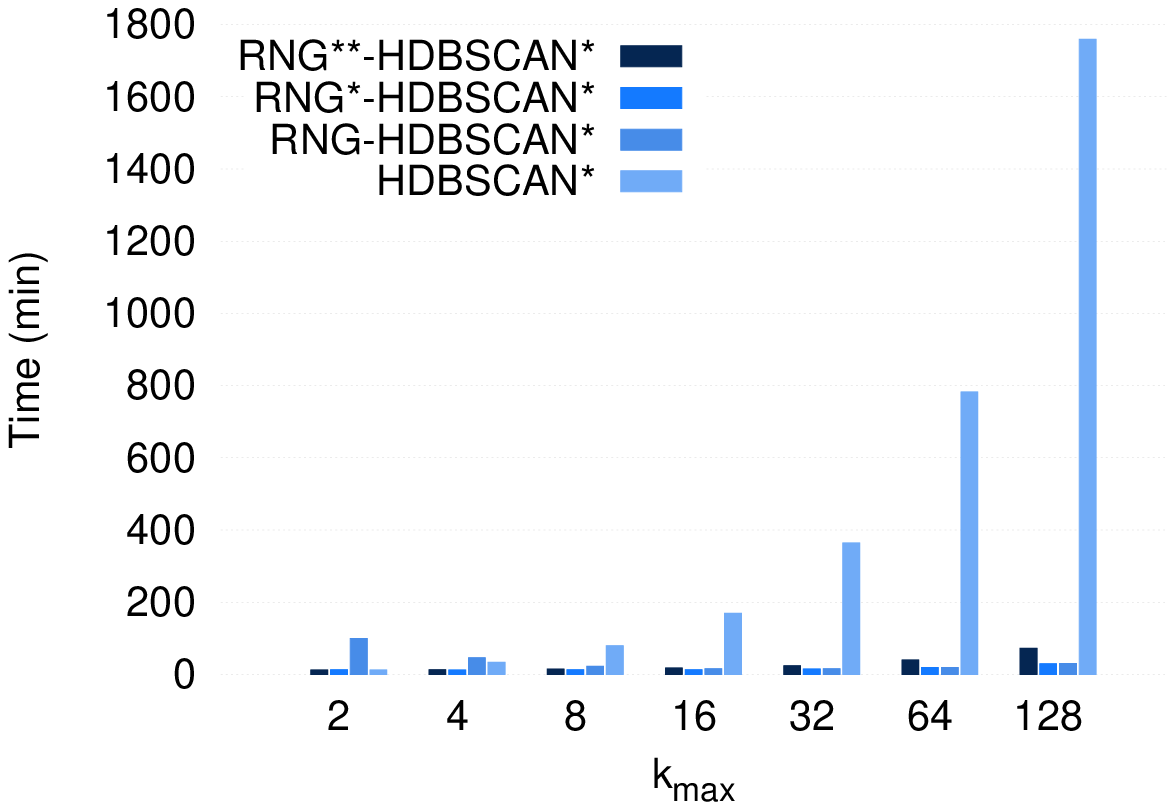} \label{fig:exp-minpoints}}
	\caption{Runtime as a function of the dataset size, dataset dimensionality, and $\maxk$. (Note that the x-axis is in log scale.)}
	\label{fig:performance}
\end{figure*}

\begin{figure*}[!htb]
	\centering
	\subfloat[\#points vs. RNG size]{\includegraphics[width=.33\textwidth]{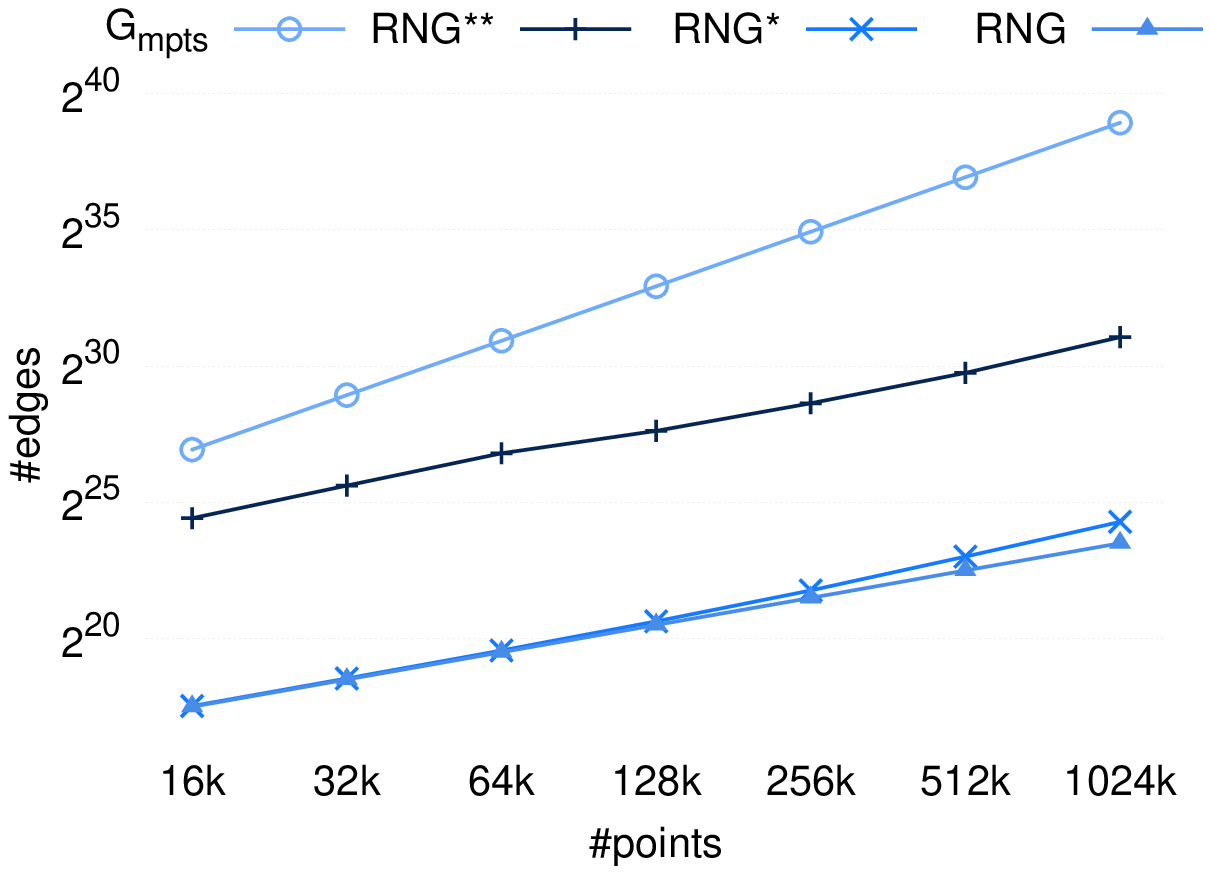} \label{fig:rng-exp-dataset}}
	\subfloat[\#dimensions vs. RNG size]{\includegraphics[width=.33\textwidth]{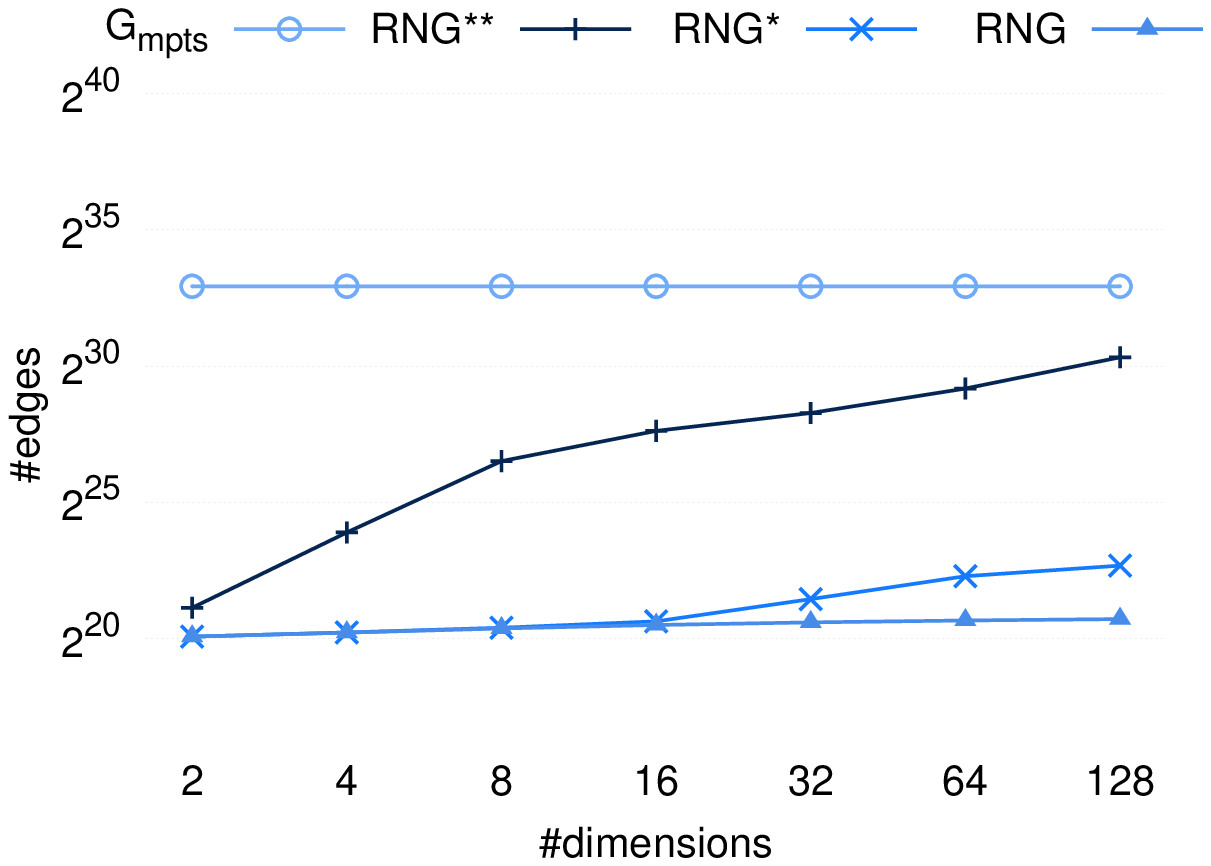} \label{fig:rng-exp-dimensionality}}
	\subfloat[$\maxk$ vs. RNG size]{\includegraphics[width=.33\textwidth]{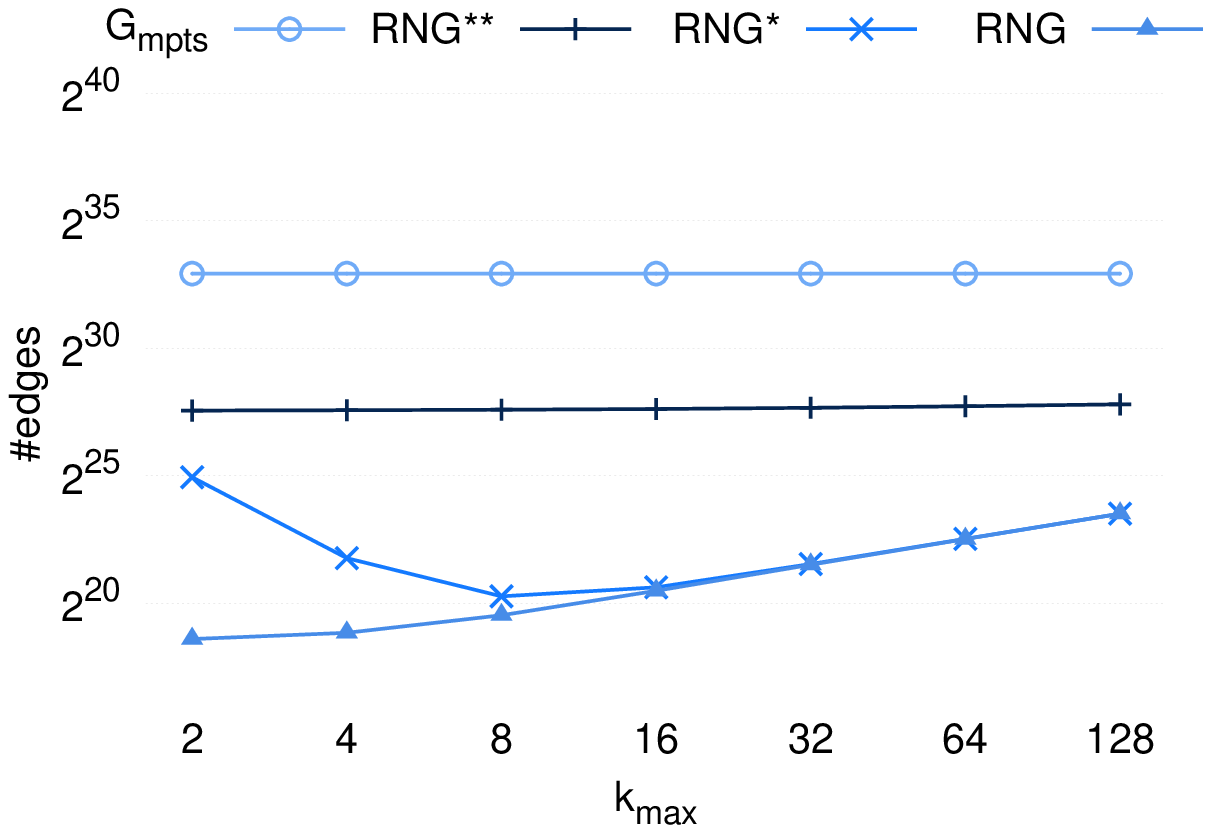} \label{fig:rng-exp-minpoints}}
	\caption{RNG size as a function of the dataset size, dataset dimensionality, and $\maxk$. (Note that both axes are in log scale.)}
	\label{fig:rng}
\end{figure*}

\begin{figure*}[!htb]
	\setlength\abovecaptionskip{-7pt}
	\centering
\includegraphics[width=\textwidth]{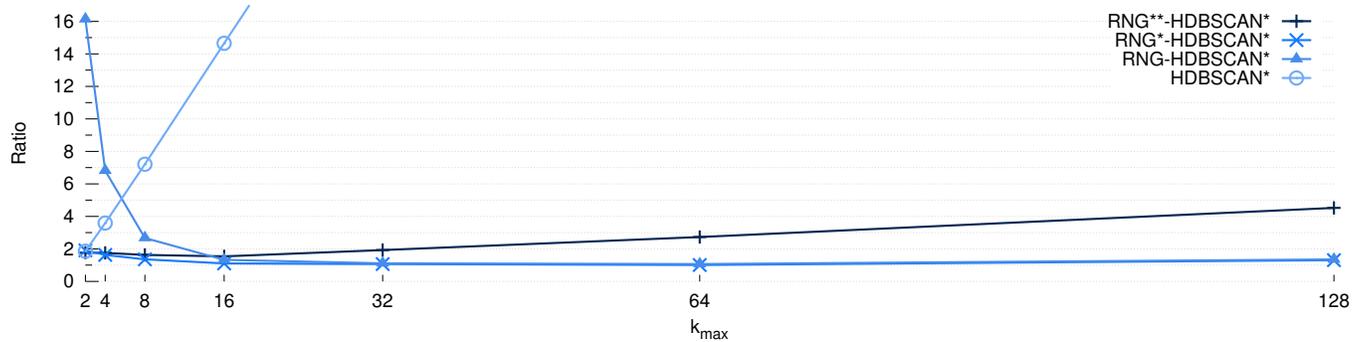} \label{fig:minpoints-ratio}
	\caption{Ratio: runtime to compute $\maxk$ MSTs/hierarchies divided by the runtime to compute a single MST/hierarchy.}
	\label{fig:minpoints-ratio}
\end{figure*}

Figure \ref{fig:exp-dataset} shows the total runtime as a function of the dataset size with default values for the remaining variables (i.e., computing 16 MSTs in 16-dimensional datasets). As expected, the runtime tends to increase as the number of points increases for all methods. For datasets up to 64k points, all strategies have similar performances, but for larger datasets the difference between our approaches and the baseline increases significantly as the number of points increases. For 128k points, the baseline strategy already takes approximately twice as much time as our approaches. For 1024k points, we actually interrupted each run of the baseline before it finished. 

Figure \ref{fig:rng-exp-dataset} shows the number of edges in $\mrg$, \unfilteredRNG, \smartRNG, and RNG, as a function of the dataset size. 
As expected, the number of edges increases with the number of points. However, the RNGs are significantly smaller than the complete graph for all dataset sizes. In fact, even for the largest dataset, the sizes of the \smartRNG\ and \exactRNG\ are smaller than the size of the $\mrg$ for the smallest dataset.

When comparing \exactRNG\ and \unfilteredRNG, the time spent filtering to obtain the exact \exactRNG\ is compensated by a smaller graph size, which in turn results in faster MST computations. This explains why both \exactRNG\ and \unfilteredRNG\ exhibit similar running times in Figure \ref{fig:exp-dataset}, despite the differences in their sizes. Only when the partial fast filter based on $k$-nearest neighbors is applied to obtain \smartRNG, the total runtime is faster. This is because the filter is effective in producing a graph that is almost as small as the exact \exactRNG, yet in less time. 

\subsection{Effect of Dimensionality}


Figure \ref{fig:exp-dimensionality} shows the effect of dataset dimensionality on the runtimes. As expected, all approaches are affected by increasing dimensionality, since most of the underlying techniques for clustering, $k$NN queries, and RNG computation are bound to eventually become less effective as dimensionality increases. This is due to a number of effects that are generally referred to as ``curse of dimensionality.'' However, since our datasets do contain cluster structures, these effects are not critically severe even in 128 dimensions.   

We can observe that all RNG-based strategies perform better than the baseline in all datasets, but as dimensionality increases, the difference between the unfiltered RNG (\unfilteredRNG) and the filtered versions (\smartRNG\ and \exactRNG) increases. This can be explained by looking at the number of graph edges, as shown in Figure \ref{fig:rng-exp-dimensionality}. 
The size of the exact \exactRNG\ is barely affected by an increase of dimensionality in these datasets, while the unfiltered \unfilteredRNG\ exhibits a pronounced growth in the number of edges, which approaches the complete graph $\mrg$. This shows that the generation of well-separated pairs is very sensitive to dimensionality, becoming less effective in implicitly excluding edges that cannot be in an RNG. On the other hand, the exact relative neighborhood graph still has significantly fewer edges than a complete graph in these scenarios  ---although, theoretically, it also must eventually approach the complete graph \cite{163414,callahan1995dealing}. 

Notably, the number of edges for \smartRNG\ increases only slightly as the dimensionality increases, which shows that the pruning strategy using only the pre-computed $k$-nearest neighbors (16, as $\maxk = 16$ in this experiment) stays quite effective, even in the 128-dimensional datasets, resulting in the best runtime performance overall.


\subsection{Effect of Upper Limit $\maxk$}

Figure \ref{fig:exp-minpoints} and Table \ref{tab:maxk} show the runtimes w.r.t. $\maxk$. The runtime of all our methods is very low compared to the baseline, for which runtime increases linearly, as expected.
%
The runtime of HDBSCAN*-\unfilteredRNG\ increases very slightly with $\maxk$ as also the number of edges increases slightly, but stays significantly below the number of edges in $\mrg$, as shown in Figure \ref{fig:rng-exp-minpoints}. 

\exactRNG-HDBSCAN* shows a slightly higher runtime for $\minpoints=2$, which then decreases for $\minpoints=4$ and $\minpoints=8$, after which it stays almost constant and becomes almost indistinguishable in performance to \smartRNG-HDBSCAN*. 
\smartRNG-HDBSCAN*, which only uses the $\maxk$-nearest neighbors of objects for pruning \unfilteredRNG, shows the most stable runtime behavior; its increase in runtime as $\maxk$ increases is almost unnoticeable. For the largest value of $\maxk$, its difference in runtime to the baseline method corresponds to a speedup of about 60 times.
The runtime behavior of \exactRNG\ and \smartRNG\ can be explained by the number of edges in \exactRNG\ and \smartRNG, shown in Figure \ref{fig:rng-exp-minpoints}. For $\minpoints=2$, the number of edges in \smartRNG\ is much larger than in \exactRNG\ (while still being smaller than in \unfilteredRNG). The reason is that the  the filtering strategy based on the $\maxk$-NN is not yet very effective when only the two nearest neighbors are considered. In this case, for all the edges that are removed from \smartRNG\ to obtain \exactRNG, a sequential scan has to be performed, which is overall more costly in terms of runtime than the gain in runtime for computing the MST of \exactRNG\ with fewer edges. These results also show that (1) computing MSTs is very fast, compared to the rest of the computation, if the underlying graphs are already relatively small compared to the complete graph, and (2) our pruning heuristic based on $\maxk$-NNs becomes more effective as $\maxk$ increases, leading to an almost indistinguishable performance between \exactRNG\ and \smartRNG\ for $\maxk \geq 16$.

While the observed speedups are  impressive, the significance of our contribution becomes even more clear, if we look at the runtime from a different perspective. Figure \ref{fig:minpoints-ratio} shows the ratio of the runtime to compute $\maxk$ MSTs over the runtime to compute a single MST.
\smartRNG\ exhibits a very stable ratio of about 2 for all values of $\maxk$, {\em i.e}, we can use it to compute as many as 128 MSTs/hierarchies for the computational cost of naively computing about 2 MSTs/hierarchies. 

\begin{table}[]
\centering
\caption{$\maxk$ vs. Runtime (min.)}
\label{tab:maxk}
\tabcolsep=0.11cm
\resizebox{\columnwidth}{!}{%
\begin{tabular}{ccccc}
\toprule
\multicolumn{1}{l}{$\maxk$} & HDBSCAN*    & RNG**-HDBSCAN* & RNG*-HDBSCAN* & RNG-HDBSCAN* \\ \midrule
2                    & \textbf{12} & \textbf{12}    & \textbf{12}   & 99           	\\
4                    & 33          & \textbf{12}    & \textbf{12}   & 45           	\\
8                    & 79          & 14             & \textbf{12}   & 22           	\\
16                   & 169         & 17             & \textbf{13}   & 15           	\\
32                   & 363         & 23             & \textbf{14}   & 15  		   	\\
64                   & 781         & 40             & \textbf{18}   & 19           	\\
128                  & 1759        & 72             & \textbf{29}   & 30			\\
\end{tabular}}
\end{table}

\section{Conclusion}
\label{sec:conclusion}
We presented an efficient strategy for computing multiple density-based clustering hierarchies. We formally showed that the use of the Relative Neighborhood Graph as a substitute for the Mutual Reachability Graph is advantageous when one wants to explore state-of-the-art HDBSCAN* solutions w.r.t. multiple values of $\minpoints$, while ensuring theoretical correctness of results. 
Our experiments showed that the proposed method can be significantly
faster than a baseline strategy based on running HDBSCAN* exhaustively (yet in an optimized way) for different values of $\minpoints$. In particular, it scales considerably better when running on large datasets and more prominently for broader ranges of $\minpoints$ values. 

In our future work we intend to investigate strategies to simultaneously explore, visualize and possibly combine the whole spectrum of clustering solutions that are available both across multiple hierarchies (corresponding to different values of $\minpoints$) as well as across different hierarchical/density levels, taking into account the quality of these solutions according to different unsupervised criteria.

%
%
\section*{Acknowledgment}
\label{sec:acknowledgments}
Research partially supported by NSERC, Canada, and by CNPq, under the Program Science without Borders, Brazil.
%
%
\bibliographystyle{abbrv}
\bibliography{ihdbscan}

\end{document}